\newcolumntype{C}[1]{>{\centering\arraybackslash}p{#1}}
\newtheorem{de}{Definition}
\newtheorem{lemma}[de]{Lemma}
\newtheorem{theo}[de]{Theorem}
\newtheorem{corollary}[de]{Corollary}
\newtheorem{remark}[de]{Remark}
\DeclareMathOperator*{\argmin}{argmin\,}
\newcommand{\RR}{\mathbb{R}}
\newcommand{\Rd}{\RR^d}
\newcommand{\tss}{\mathrm{TSS}}
\newcommand{\mss}{\mathrm{MSS}}
\newcommand{\rss}{\mathrm{RSS}}
\newcommand{\ssa}{\mathrm{SSa}}
\newcommand{\ssb}{\mathrm{SSb}}
\newcommand{\ssi}{\mathrm{SSi}}
\newcommand{\inpto}{\stackrel{p}{\longrightarrow}}
\newcommand{\inlawto}{\stackrel{\mathcal{D}}{\longrightarrow}}
\newcommand{\mcn}{\mathcal{N}}
\newcommand{\tn}{\tilde{n}}
\newcommand{\tN}{\tilde{N}}
\newcommand{\tX}{\tilde{X}}
\newcommand{\eps}{\varepsilon}
\newcommand{\abs}[1]{\lvert #1 \rvert}
\newcommand{\norm}[1]{\lVert #1 \rVert}
\newcommand{\mcr}{\mathcal{R}}
\newcommand{\mcx}{\mathcal{X}}
\newcommand{\EE}{\mathbb{E}}
\newcommand{\NN}{\mathbb{N}}
\newcommand{\ZZ}{\mathbb{Z}}
\newcommand{\mfnfin}{\mathfrak{N}_{\mathrm{fin}}}
\newcommand{\nin}{\noindent}
\newcommand{\dtt}{d_{\mathrm{TT}}}
\newcommand{\drtt}{d_{\mathrm{RTT}}}
\newcommand{\Leb}{\mathrm{Leb}}
\newcommand{\strauss}{\mathrm{Strauss}}
\newcommand{\csr}{\mathrm{CSR}}
\begin{document}

\title{ANOVA for Data in Metric Spaces,\\ with Applications to Spatial Point Patterns}
\author{Raoul M\"uller\footnote{Work supported by DFG RTG 2088.}\ \footnote{raoul.mueller@uni-goettingen.de}\ \footnote{Institute for Mathematical Stochastics, University of G\"ottingen, 37077 G\"ottingen, Germany.} \and Dominic Schuhmacher\footnotemark[3] \and Jorge Mateu\footnote{Work partially funded by Ministry of Science and Innovation with grant PID2019-107392RB-I00, and by grant UJI-B2021-37 from University Jaume I.}\ \footnote{Department of Mathematics, University Jaume I, 12071 Castell\'on, Spain.}  \\
}
\maketitle

\begin{abstract}
We give a review of recent ANOVA-like procedures for testing group differences based on data in a metric space and present a new such procedure.
Our statistic is based on the classic Levene's test for detecting differences in dispersion. It uses only pairwise distances of data points and and can be computed quickly and precisely in situations where the computation of barycenters (``generalized means'') in the data space is slow, only by approximation or even infeasible. We show the asymptotic normality of our test statistic and present simulation studies for spatial point pattern data, in which we compare the various procedures in a 1-way ANOVA setting. As an application, we perform a 2-way ANOVA on a data set of bubbles in a mineral flotation process.
\end{abstract}

\noindent

\section{Introduction}

Real-world statistical data is often not Euclidean, involving components that are most suitably analyzed in a more complicated space. Examples include spaces of point patterns and more general subsets, trees and more general graphs, functions and images.

In recent years a number of methods have been proposed for analyzing group differences of such data by generalizing classical analysis of variance (ANOVA) ideas to more complex data spaces. Examples include \cite{cuevas2004anova} for functional data, \cite{huckemann2009intrinsic} for data on Riemannian manifolds and \cite{ramon2016new} for point pattern data.
A common feature of the underlying spaces is that there is typically a more or less natural concept of distance between data points available. In addition to the more obvious choices of distances on function spaces and Riemannian manifolds, suitable metrics for tree spaces, graph spaces and point pattern spaces can be found in \cite{billera2001geometry}, \cite{ginestet2017hypothesis} and \cite{muller2020metrics}, respectively.

In the present paper we focus on generalized ANOVA-procedures for metric spaces without using any more special structure of the space. There is a number of preceding articles that work in similar generality.

\cite{anderson2001new} proposes to perform ANOVA based on pairwise dissimilarities of ob\-ser\-va\-tions rather than Euclidean distances between observations and their group means, and introduces the name PERMANOVA for this procedure.
While not directly referring to any more abstract spaces than $\Rd$, that article clearly discusses the abstract template of doing non-Euclidean ANOVA without using a centroid object.
We discuss this further in Subsection~\ref{subsec:Anderson}. \cite{anderson2006distance}
proposes multidimensional scaling followed by a Levene's test (using the cen\-troid object in the principal coordinate space) for detecting differences of within-group dis\-per\-sions (scatter, variability); this is referred to as PERMDISP, see \cite{anderson2017permutational}.
\cite{anderson2017some} and \cite{hamidi2019w}
correct the PERMANOVA statistic for heteroscedasticity in the unbalanced setting based on the variants of classical ANOVA by Brown--Forsythe and Welch, respectively.

In an independent line of research, \cite{dubey2019frechet} design an ANOVA procedure on metric spaces using Fr\'echet means as centroid objects. They propose to use as statistic the sum of an ANOVA-term and a Levene-term. We discuss this further in Subsection~\ref{subsec:DM}.

In the present paper we formulate Anderson's PERMANOVA on general metric spaces. We simply refer to the resulting method as Anderson ANOVA, because the use of M (due to the use of $\Rd$ in Anderson's work) seems inappropriate in our context and the use of PER (referring to the fact that a permutation test is performed) does not distinguish it from the other methods used. Rather than pursuing the PERMDISP method mentioned above, we introduce a new test for detecting differences of within-group dispersion based on Levene's procedure and refer to it as $L$-test. Our test statistic works directly with the pairwise distances between observations without using any kind of group centroid, neither in the original metric space nor in any principal coordinate space. We show that it has an asymptotic $\chi_1^2$-distribution, but we recommend using it with a permutation test just as the other statistics.

We also study the two summands used by \cite{dubey2019frechet} as separate test statistics for detecting differences in location and dispersion, respectively. We refer to Table~\ref{tab:DifferentStatistics} for an overview of the methods discussed.
\begin{table}[htb]
  \centering
        \begin{tabular}{|c|c|c|}
            \hline
            & \textit{location} & \textit{dispersion}
            \\
            \hline
            \textit{pairwise distances} & Anderson, Subsec.~\ref{subsec:Anderson} & New $L$-test, Section~\ref{sec:distanceLevene}
            \\
            \hline
            \textit{Fr\'echet means} & Dubey--Müller, Subsec.~\ref{subsec:DM} & Dubey--Müller, Subsec.~\ref{subsec:DM}
            \\
            \hline
        \end{tabular}
        \caption{\label{tab:DifferentStatistics} Overview of the non-Euclidean ANOVA methods studied in this paper. Procedures targeting \emph{location} are derived from the classic ANOVA statistic, whereas those targeting \emph{dispersion} are derived from the classic Levene's statistic (ANOVA statistic for ``deviations''). The rows distinguish whether computationally a procedure is based on (simple arithmetics of) \emph{pairwise distances} or on a centroid object (here a \emph{Fr\'echet mean}) in the metric space.}
\end{table}

Although the methods described are applicable in general metric spaces, our central goal in undertaking this research was to be able to perform ANOVA for point pattern data, see also the discussion section of \cite{muller2020metrics}. Among all the metric spaces mentioned above, we therefore focus in the later part of the present paper on the space of finite point patterns equipped with the TT-metric from~\cite{muller2020metrics}. As in many other spaces, exact Fr\'echet means can be computed within reasonable time only for (very) small data sets and one typically has to resort to a heuristic algorithm that finds only local minima of the Fr\'echet functional. We present simulation studies to compare the powers of the four tests across various situations and to understand the quality of approximation by the limiting $\chi_1^2$-distribution from a practical point of view. We also present an application of a 2-way ANOVA on a data set of bubbles in a mineral flotation process.

The plan of the paper is as follows: Section~2 contains a brief reminder of central aspects of classical ANOVA including Levene's test. In Section 3 we give a rather detailed presentation of Anderson ANOVA in metric spaces and the two summands proposed by Dubey and Müller. In Section 4 we introduce our new L-statistic, discuss its relation to the other methods and the original Levene's test, and show its asymptotic distribution. Section 5 is a short overview of the metric space of point patterns. In Sections~6 and 7 we present the simulation studies and the real-world data example. The paper ends with some further conclusions in Section 8.

\section{Classic ANOVA} \label{sec:classic}

For self-containedness and easy reference we briefly remind the reader of some facts and formulae in the context of the classical ANOVA going back to \cite{fisher1925statistical}. 
Details can be found in~\cite{scheffe1967analysis}.

\subsubsection*{One-Way ANOVA}

Given independent observations $x_{ij} \in \RR$, $1\leq j \leq n_i$, $1 \leq i \leq k$, from $k$ potentially different distributions $P_1,\ldots,P_k$, we do the following sum-of-squares decomposition
$$
  \tss = \mss + \rss,
$$
where
\begin{align*}
    && && \tss &= \sum_{i=1}^{k} \sum_{j=1}^{n_i} (x_{ij} - \bar{x}_{\cdot\cdot})^2 &&\text{\textit{(total sum of squares)}} && &&
    \\
    && && \mss &= \sum_{i=1}^{k}  n_i(\bar{x}_{i\cdot} - \bar{x}_{\cdot\cdot})^2 &&\text{\textit{(model sum of squares)}} && &&
    \\
    && && \rss &= \sum_{i=1}^{k} \sum_{j=1}^{n_i} (x_{ij} - \bar{x}_{i\cdot})^2 &&\text{\textit{(residual sum of squares).}} && &&
\end{align*}
Here $\bar{x}_{i\cdot} = \frac{1}{n_i}\sum_{j=1}^{n_i} x_{ij}$ denotes the $i$-th group mean and $\bar{x}_{\cdot\cdot} = \frac{1}{n}\sum_{i=1}^{k} \sum_{j=1}^{n_i} x_{ij}$ denotes the overall mean. We write $n = \sum_{i=1}^k n_i$ for the total number of observations.

Assume for now that the group distributions $P_i$ are Gaussian with the same variance. Under the null hypothesis that also the means are the same (hence all data comes from the same normal distribution), it is well-known that the ANOVA statistics
\begin{equation}  \label{eq:1wayF}
  F = \frac{n-k}{k-1}\frac{\mss}{\rss},
\end{equation}
describing the ratio between the variability explained by the model and the total variability in the data, is $F$-distributed with $k-1$ and $n-k$ degrees of freedom. Since $T \sim F(d_1,d_2)$ implies $d_1 T \inlawto \chi^2_{d_1}$ as $d_2 \to \infty$, we obtain
\begin{equation}
  (k-1) F \inlawto \chi^2_{k-1} \quad \text{as $n \to \infty$.}
\end{equation}
The \emph{asymptotic} result remains true even if $P_1 = P_2 = \ldots = P_k$ is non-Gaussian, but has second moments and there are $\lambda_1,\ldots,\lambda_k > 0$ such that the ratios of group sizes satisfy $\frac{n_i}{n} \to \lambda_i$, see e.g. \cite{wooldridge2010econometric}, Section 3.6.2.

\begin{remark} \label{rem:strictANOVA}
Strictly speaking ANOVA techniques are designed for inference within a linear model of different group means plus errors. Based on an error distribution $P$ with mean zero, one considers the model equations
\begin{equation*}
  x_{ij} = \mu_i + \eps_{ij}, \quad 1\leq j \leq n_i,\, 1 \leq i \leq k,
\end{equation*}
where $\mu_i \in \RR$ are the different group means and $\eps_{ij}$ are i.i.d.\ $P$-distributed error terms. In terms of the group distributions above this means that $P_i = P \ast \delta_{\mu_i}$, i.e.\ $P_i$ is obtained by shifting $P$ by $\mu_i$. Note that the asymptotic $\chi^2_{k-1}$-test does not need this assumption since in any case the null hypothesis just correspond to having $k$ times the same distribution. At the same time we cannot expect this test to achieve high power against all alternatives that have substantially different group distributions (see also the paragraph on Levene's test below). We will take up this point when discussing ANOVA on metric spaces, where typically ``shifting the distribution'' is meaningless (but may have an intuitive counterpart).
\end{remark}

\subsubsection*{Two-Way ANOVA}

As soon as more than one grouping factor is involved, important design decisions come into play, such as if factors are (partially) nested or if we allow for interaction terms between several factors on the same level. ANOVA has a long-standing history with many different designs. As an example which is pursued further in later sections we remind the reader of the balanced two-way ANOVA (two main factors, with interaction terms, same number $\tn$ of observations for each factor combination).

Given independent observations $x_{i_1 i_2 j} \in \RR$, $1\leq j \leq \tn$, $1 \leq i_1 \leq k_1$, $1 \leq i_2 \leq k_2$ from groups obtained by crossing a Factor $a$ with $k_1$ levels and a Factor $b$ with $k_2$ levels (with $n_{i_1 i_2} := \tn$ observations for each combination), we can perform a finer sum-of-squares decomposition
\begin{equation*}
  \tss = \ssa + \ssb + \ssi + \rss,
\end{equation*}
splitting up the model sum of squares into sums of squares for the individual factors and an interaction sum of squares. In formulae:
\begin{align*}
    \tss &= \sum_{i_1=1}^{k_1}\sum_{i_2=1}^{k_2}\sum_{j=1}^{\tn} (x_{i_1i_2j}-\bar{x}_{\cdot\cdot\cdot})^2
    \\
    \rss &= \sum_{i_1=1}^{k_1}\sum_{i_2=1}^{k_2}\sum_{j=1}^{\tn} (x_{i_1i_2j}-\bar{x}_{i_1i_2\cdot})^2
    \\
    \ssa &= \sum_{i_1=1}^{k_1} k_2 \tn (\bar{x}_{i_1\cdot\cdot}-\bar{x}_{\cdot\cdot\cdot})^2
    \\
    \ssb &= \sum_{i_2=1}^{k_2} k_1 \tn (\bar{x}_{\cdot i_2\cdot}-\bar{x}_{\cdot\cdot\cdot})^2
    \\
    \ssi &= \sum_{i_1=1}^{k_1}\sum_{i_2=1}^{k_2} \tn (\bar{x}_{i_1i_2\cdot} - \bar{x}_{i_1\cdot\cdot} - \bar{x}_{\cdot i_2\cdot} + \bar{x}_{\cdot\cdot\cdot})^2,
\end{align*}
where the various means are taken over the dot components while keeping the given indices fixed. Set $n = \sum_{i_1=1}^{k_1} \sum_{i_2=1}^{k_2} n_{i_1 i_2} = k_1 k_2 \tn$.

In addition to performing an omnibus test for group differences as for one-way ANOVA, we may then test for effects of Factor a and b separately, as well as for an interaction effect. 
The corresponding statistics are
\begin{align*}
    Fa = \frac{n - k_1 k_2}{k_1 - 1} \frac{\ssa}{\rss}, \quad
    Fb = \frac{n - k_1 k_2}{k_2 - 1} \frac{\ssb}{\rss}, \quad
    Fi = \frac{n - k_1 k_2}{(k_1-1)(k_2-1)} \frac{\ssi}{\rss}.
\end{align*}
If the observations come from Gaussian distributions with equal variances, each of the three statistics is $F$-distributed again under the corresponding null hypothesis that different levels of the factor or interaction to be tested do not lead to different shifts in mean. The degrees of freedom can be read from the denominator and the numerator, respectively, of the first ratio in each statistic.

\subsubsection*{Levene's Test}\label{subsec:Levene}

The test first proposed in \cite{levene1960robust}
was originally developed as a preliminary test to check for equal variances \emph{before} applying the basic ANOVA $F$-test in the Gaussian setting. This was important, as it was well-known at the time that for the goal of inference about differences in the means of the various groups (see Remark~\ref{rem:strictANOVA}), the size of the $F$-test can depart substantially from its nominal size if group variances are not equal.

\cite{levene1960robust} proposed to use as test statistic the usual ANOVA statistic, but to replace the observations $x_{ij}$ by the absolute differences from their group means $z_{ij} = \abs{x_{ij}-\bar{x}_{i\cdot}}$, i.e.
\begin{align}  \label{stat:Levene}
    \widetilde{F} = \frac{n-k}{k-1} \cdot \frac{\sum_{i=1}^k n_i (\bar{z}_{i\cdot}-\bar{z}_{\cdot\cdot})^2}
    {\sum_{i=1}^k \sum_{j=1}^{n_i} (z_{ij}-\bar{z}_{i\cdot})^2}.
\end{align}
If the observations are independently sampled from the same Gaussian distributions, it is plausible that $\widetilde{F}$ is still approximately $F$-distributed, because the dependence between the $z_{ij}$ is small even at moderate group sizes. 
This was confirmed by simulation in \cite{levene1960robust}. \cite{brown1974robust}
present a larger simulation experiment suggesting that replacing the $\bar{x}_{i\cdot}$ in the definition of $z_{ij}$ by a trimmed mean or median leads to a more robust test for non-Gaussian data. 

Current best practice suggests to perform a Welch-modified ANOVA directly if the as\-sump\-tion of equal variance is unclear as it results only in a small loss of power in the case where the variances are indeed equal. We refer to \cite{gastwirth2009impact} for a comprehensive presentation on Levene's test including this question and many further developments.

Levene's test and its variants remain highly important today as differences in variances (or some other measure of dispersion) are often in the center of attention in their own rights. In the rest of the paper we present tests on differences in ``location'' of groups and differences in ``dispersion'' of groups, both based on inter-point distances in a metric space. Our goal is to combine one of either kind in order to detect group differences in some universality.

\section{Non-Euclidean ANOVA}

In this and the next sections we assume that our data lies in a general metric space $(\mcx,d)$. We present existing methods of testing for group differences based on  ANOVA-like ideas. For the presentation we focus on generalizations of 1-way ANOVA, but provide further information on which methods can easily be extended to more complex designs. We always assume having $n = \sum_{i=1}^k n_i$ independent observations $x_{ij} \in \mcx$, $1\leq j \leq n_i$, $1 \leq i \leq k$ from $k$ potentially different distributions $P_1,\ldots,P_k$ on $\mcx$ (with Borel $\sigma$-algebra).

\subsection{Anderson ANOVA}\label{subsec:Anderson}

\cite{anderson2001new} argues, in the context of data sets in ecology, that traditional multivariate analogues of ANOVA are too stringent in their assumptions. These are typically based on similar statistics as \eqref{eq:1wayF}, but with absolute values replaced by Euclidean norms, see e.g. \cite{mardia1979multivariate}
Section 12.3. We may avoid the use of means of observations by writing $\tss - \rss$ instead of $\mss$ and replacing the sums of squared deviations from the mean with the help of the formula
\begin{align*}
  \sum_{j=1}^m \norm{y_j - \overline{y}}^2 = \frac{1}{2m}\sum_{j_1,j_2=1}^{m} \norm{y_{j_1}-y_{j_2}}^2 = \frac{1}{m}\sum_{j_1,j_2=1}^{m,\, <} \norm{y_{j_1}-y_{j_2}}^2,
\end{align*}
where we indicate by ``$<$'' in the summation bound that the sum is to be taken over strictly ordered summands only, here $j_1 < j_2$. Anderson proposes to replace the pairwise Euclidean distances by more general dissimilarities between observations and performs a permutation test. In our context we simply use the pairwise distances in the metric space. Thus
\begin{align*}
    \tss &= \frac{1}{n} \biggl( \sum_{i_1,i_2=1}^{k,\, <} \sum_{j_1=1}^{n_{i_1}} \sum_{j_2 = 1}^{n_{i_2}} d^2(x_{i_1 j_1}, x_{i_2 j_2}) + \sum_{i=1}^k \sum_{j_1,j_2=1}^{n_i,\,<} d^2(x_{i j_1}, x_{i j_2}) \biggr) \\
    \rss &= \sum_{i=1}^k \frac{1}{n_i} \sum_{j_1,j_2=1}^{n_i,\,<} d^2(x_{ij_1}, x_{ij_2}) 
    \\[1mm]
    \mss &= \tss-\rss
\end{align*}
and the final Anderson ANOVA statistic becomes
\begin{equation*}
      F_{\text{A}} = \frac{n-k}{k-1} \frac{\mss}{\rss}.
\end{equation*}

It has been noted in various places that this statistic may suffer from type I error inflation (in terms of a null hypothesis of equal \emph{means} in Euclidean space) and substantial loss of power in the unbalanced setting if the groups are heteroscedastic; see e.g.\cite{alekseyenko2016multivariate}. \cite{anderson2017some}
and \cite{hamidi2019w}
propose improvements based on the classical ANOVA variants by \cite{brown1974small} and \cite{welch1951comparison}, respectively. In the former, the $F$-statistic is replaced by
\begin{equation*}
  F_{\text{BF}} = \frac{\mss}{\sum_{i=1}^k (1-\frac{n_i}{n}) \frac{1}{n_i (n_i-1)} \sum_{j_1,j_2=1}^{n_i,\,<} d^2(x_{ij_1}, x_{ij_2})}.
\end{equation*}
For the simulation studies in Section~\ref{sec:simulstud} we concentrate on the balanced setting, for which Anderson $F_A$ performs typically well even in presence of heteroscedacity. We therefore do not discuss these improvements further, which in the balanced setting do not change the statistic.

\subsection{Fr\'echet ANOVA}\label{subsec:DM}

\cite{dubey2019frechet} introduce ANOVA-like terms that use distances in the metric $d$ to Fr\'echet means rather than absolute differences to averages. For observation $y_1, \ldots y_m \in \mcx$ the Fr\'echet mean is defined as
\begin{align} \label{eq:frechetmean}
    \bar{y} = \argmin_{z \in \mcx} \sum_{i=1}^m d^2(y_i,z).
\end{align}
One of the assumptions in \cite{dubey2019frechet} is that all Fr\'echet means considered exist and are unique. For our usual set of observations we denote by $\bar{x}_{i \cdot}$ the Fr\'echet mean of $x_{i1}, \ldots, x_{in_i}$, $i =1, \ldots, k$ and by $\bar{x}_{\cdot \cdot}$ the Fr\'echet mean of all observations. Following the notation in \cite{dubey2019frechet}, we write the Fr\'echet variance for the $i$-th group and the total Fr\'echet variance as
\begin{align*}
   \hat{V}_i = \frac{1}{n_i} \sum_{j=1}^{n_i} d^2(x_{ij}, \bar{x}_{i\cdot}) \quad \text{and} \quad \hat{V}_p = \frac{1}{n} \sum_{i=1}^k\sum_{j=1}^{n_i} d^2(x_{ij}, \bar{x}_{\cdot \cdot}),
\end{align*}
respectively. While $\hat{V}_i$ is the mean of $d^2(x_{ij}, \bar{x}_{i\cdot})$, $j =1, \ldots, n_i$, we also require the corresponding variance
\begin{align*}
  \hat{\sigma}_i^2 = \frac{1}{n_i} \sum_{j=1}^{n_i} d^4(x_{ij}, \bar{x}_{i\cdot}) - \hat{V}_i^2.
\end{align*}
Setting $\lambda_i = \frac{n_i}{n}$ one finally obtains
\begin{align*}
    U_n &= \sum_{i_1, i_2=1}^{k,\,<}\frac{\lambda_{i_1} \lambda_{i_2}}{\hat{\sigma}_{i_1}^2 \hat{\sigma}_{i_2}^2} (\hat{V}_{i_1} - \hat{V}_{i_2})^2
    \\
    F_n &= \hat{V}_p - \sum_{i=1}^k \lambda_i \hat{V}_i
    \\
    T &= \frac{nU_n}{\sum_{i=1}^k \frac{\lambda_i}{\hat{\sigma}_i^2}} + \frac{nF_n^2}{\sum_{i=1}^k \lambda_i^2 \hat{\sigma}_i^2} =: T_L + T_F.
\end{align*}

In the Euclidean setting of Section~\ref{sec:classic} the term $F_n$ is equal to $\frac{1}{n}(\tss-\rss)$ and the denominator of $T_F$ is then an estimator for the variance of $\frac{1}{n}\rss$, so that $T_F$ has close ties to the ANOVA F-statistic. The unweighted summands $(\hat{V}_{i_1} - \hat{V}_{i_2})^2$ of $U_n$ are similar in spirit to the terms $(\bar{z}_{i\cdot}-\bar{z}_{\cdot\cdot})^2$ from the definition of Levene's statistic, and in fact it appears that in the Euclidean case $T_L$ corresponds exactly to a simpler variant of Welch's ANOVA applied to $d^2(x_{ij}, \bar{x}_{i\cdot})$, $j = 1, \ldots, n_i$, $i = 1, \ldots, k$; see the computation in Formulae~(8)--(16) in \cite{hamidi2019w}. Thus $T_L$ has close ties to Levene's statistic.

Dubey and Müller show under a list of conditions pertaining to existence and uniqueness of theoretical and empirical Fr\'echet means and the complexity of the metric space (in terms of entropy integrals) that
\begin{equation*}
  \frac{nU_n}{\sum_{i=1}^k \frac{\lambda_i}{\hat{\sigma}_i^2}} \inlawto \chi^2_{k-1} \quad \text{and} \quad \frac{nF_n^2}{\sum_{i=1}^k \lambda_i^2 \hat{\sigma}_i^2} \inlawto 0 \quad \text{as $n \to \infty$}.
\end{equation*}
The authors advocate the simple addition of the two terms in order to obtain a single test statistic~$T$, maybe with weights if there is prior information available whether to rather look out for inequality of Fr\'echet means or of Fr\'echet variances. However, due to the unbalanced convergence of the two terms and the fact that the reason for the concrete normalization (especially) of $T_F$ remains a bit inscrutable to us, we prefer to analyze the two summands separately in Section~\ref{sec:simulstud}.

\section{A New Non-Euclidean Method of Levene Type}  \label{sec:distanceLevene}

What appears to be missing is a test for detecting differences of within-group dispersion that is based directly on pairwise distances between observations in the metric space. The idea of the PERMDISP-test mentioned in the introduction, i.e.\ performing multidimensional scaling and applying Levene's test in the principal coordinate space, is to some extent applicable here. However, it is rather an indirect method and it is methodologically not on the same level as the Anderson $F_A$. Indeed multidimensional scaling can be applied in combination with \emph{any} Euclidean procedure, so the PERMDISP-method should be rather paired up with the analog method of multidimensional scaling plus applying Euclidean (M)ANOVA. What is more, it contains an unwelcome tuning parameter, the number of principal coordinates, which is not easy to choose, but may be crucial.
Instead we propose the following test of Levene type for data in a metric space.

\subsection{Form and Properties} \label{subsec:distanceLeveneBasics}

We assume the same setup as in the previous section, i.e.\ there are $n = \sum_{i=1}^k n_i$ independent observations $x_{ij} \in \mcx$, $1\leq j \leq n_i$, $1 \leq i \leq k$ from $k$ potentially different distributions $P_1,\ldots,P_k$ on $\mcx$. Set
$N_i = \binom{n_i}{2}$ and $N = \sum_{i=1}^k N_i$. As a surrogate for the individual deviation terms $z_{ij}$ from Levene's statistic~\eqref{stat:Levene}, which in a general metric space would require the use of a Fr\'echet or similar mean, we use $d_{i, \{j_1,j_2\}} := \frac12 d(x_{ij_1},x_{ij_2})$. To simplify the notation, we enumerate the two-element subsets of $\{1,\ldots,n_i\}$ by $j = 1, \ldots, N_i$ and use $d_{ij}$ rather than $d_{i, \{j_1,j_2\}}$ for the $j$-th half-distance in the $i$-th group.

In a first step we assume that $n_1 = \ldots = n_k$ (balanced case) and emulate the statistics~\eqref{stat:Levene} by setting
\begin{align}\label{stat:distleveneBalanced}
  L &:= \frac{N-k}{k-1}\frac{\sum_{i=1}^{k}n_i(\bar{d}_{i\cdot}-\bar{d}_{\cdot\cdot})^2}{\sum_{i=1}^{k}\sum_{j=1}^{N_i}(d_{ij} - \bar{d}_{i\cdot})^2} 
\end{align}
where
\begin{align*}
\bar{d}_{i\cdot} = \frac{1}{N_i} \sum_{j=1}^{N_i} d_{ij} \quad \text{and} \quad
\bar{d}_{\cdot\cdot} = \frac{1}{N} \sum_{i=1}^k \sum_{j=1}^{N_i} d_{ij}
\end{align*}
denote the $i$-th group mean and the overall mean over pairwise distances, respectively.

Typographically the main fractions of Equations~\eqref{stat:distleveneBalanced} and \eqref{stat:Levene} are very similar, but the way they use the data $x_{ij}$ is quite different in that we replace $z_{ij} = \abs{x_{ij}-\bar{x}_{i\cdot}}$, 
$1 \leq j \leq n_i$ by $d_{i,\{j_1,j_2\}} = \frac12 d(x_{ij_1},x_{ij_2})$, $1 \leq j_1 < j_2 \leq n_i$. Note that we keep $n_i$ in the numerator rather than replacing it by $N_i$, which might have seemed more natural at first glance. The reason is the substantial dependence of the random variables $d_{i, \{j_1,j_2\}}$ (as opposed to the less substantial dependence between the $z_{ij}$) for each $i$, which implies that $n_i$, not $N_i$, is the correct scaling factor; see Subsection~\ref{subsec:limitdistr}. Note further that, for the same reason, the main denominator is not the most natural choice here, but it is convenient since it keeps the statistic similar to the original Levene statistic, 
is fast to compute and empirically performs no worse than the more natural choice discussed in Subsection~\ref{subsec:limitdistr}.

There are various ways how one might generalize \eqref{stat:distleveneBalanced} to general group sizes. We propose using
\begin{align}\label{stat:distlevene}
L &:= \frac{N-k}{k-1}\frac{\frac{1}{n}\sum_{i=1}^{k-1}\sum_{j=i+1}^{k}n_in_j(\bar{d}_{i\cdot}-\bar{d}_{j\cdot})^2}{\sum_{i=1}^{k}\sum_{j=1}^{N_i}(d_{ij} - \bar{d}_{i\cdot})^2}.
\end{align}
Direct computation shows that Equations~\eqref{stat:distlevene} and~\eqref{stat:distleveneBalanced} agree in the balanced case, but not in general; see Remark~\ref{rem:idemcomp}. The statistic~\eqref{stat:distlevene} performs well in several respects: it allows for an asymptotic distribution ($\chi_{k-1}^2$ up to a deterministic factor, see Corollary~\ref{cor:mainasymp}), is still fast to compute and shows a reasonable performance for unequal group sizes, though it may well be that a more judicious scaling that takes more proper care of different group sizes would be superior.

We briefly come back to this last point in Section~\ref{sec:simulstud}, but do not go much deeper in the present paper because based on additional considerations, both theoretical and from simulation studies, we do not see any clear improvements when choosing different normalizations.

In spite of the limit distribution which we compute in the next section, we recommend performing a permutation test as for the other statistics considered.
For this we permute the observations, not only their distances, i.e.\ new permutations use distances that are potentially different from the pairwise within-group distances of the original data. As a consequence not only the RSS changes with permutations, but also the TSS.

It is easy enough to generalize the construction of the above test statistic to more complex experimental designs. As an example we take up the balanced two-way ANOVA from Section~\ref{sec:classic} and form the corresponding Levene-type statistics for $(\mcx,d)$. For the specific statistics see Section \ref{subsec:balancedTwoWayLevene}.

\subsection{Limit Distribution}
\label{subsec:limitdistr}

In this subsection we derive asymptotic distributions for the statistic $L$ from~\eqref{stat:distlevene} and for the related statistic
\begin{align}  \label{stat:distlevenetilde}
  \widetilde{L} := \frac{N^*- k}{k-1} \frac{\frac{1}{n}\sum_{i=1}^{k-1}\sum_{j=i+1}^{k}n_in_j(\bar{d}_{i\cdot}-\bar{d}_{j\cdot})^2}{4\,T_n},
\end{align}
where $N^* = \sum_{i=1}^k n_i(n_i-1)^2$ and
\begin{align}  \label{eq:hatgamma_ultimate}
  T_n = \sum_{i=1}^k
  \sum_{\substack{j_1,j_2,j_3=1 \\ j_1 \not\in\{j_2,j_3\}}}^{n_i} \bigl(d_{i,\{j_1,j_2\}} - \bar{d}_{i\cdot}\bigr) \bigl( d_{i,\{j_1,j_3\}} - \bar{d}_{i\cdot}\bigr). 
\end{align}
The previous formula makes it necessary to use the more complicated notation $d_{i,\{j_1,j_2\}} = \frac12 d(x_{ij_1},x_{ij_2})$ from the beginning of Subsection~\ref{subsec:distanceLeveneBasics}. Note that $\frac{1}{N^*-k} T_n$ is a natural group based estimator of $\mathrm{Cov}\bigl(\frac12 d(X_1,X_2), \frac12 d(X_1,X_3) \bigr)$, where $X_1,X_2,X_3$ are three independent random variables sampled from the distribution of the group. The normalization by ${N^*-k}$ rather than ${N^{*}}$ is simply modeled after the bias correcting term for independent data points.

In spite of the ANOVA-like construction, we cannot use the asymptotic theory for ANOVA directly,
because the distances $d_{i,\{j_1,j_2\}}$, our ``data'', stem from dependent random variables for each $i$. This dependence is taken into account by using the factor $\frac{n_i n_j}{n}$ rather than $N_i$ or $N_j$ in the numerator and by normalizing with $\frac{1}{N^*-k} 4\, T_n$ in~\eqref{stat:distlevenetilde}, which then still allows to obtain the asymptotic $\chi^2_{k-1}$-distribution for $(k-1) \widetilde{L}$. In contrast $(k-1) L$ converges ``only'' towards a multiple of $\chi^2_{k-1}$ that depends on parameters of the group distribution. 
\begin{theo} \label{thm:mainasymp}
   Assume that the Borel $\sigma$-algebra for $(\mcx,d)$ is countably generated.
   In the usual 1-way setup of Subsection~\ref{subsec:distanceLeveneBasics} assume that $P_1= \ldots = P_k = P$ for a distribution $P$ that is not a Dirac distribution and satisfies $\int_{\mcx} \int_{\mcx} d^2(x,y) \, P(dx) \, P(dy) < \infty$. Suppose that there are $\lambda_i > 0$ such that $n_i/n \to \lambda_i$ for every $i$ as $n \to \infty$. Then we have
   \begin{equation*}
     (k-1) \, \widetilde{L} \, \inlawto \, \chi^2_{k-1} \quad \text{as $n \to \infty$.}
   \end{equation*}
\end{theo}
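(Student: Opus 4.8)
The plan is to reduce the asymptotic behaviour of $\widetilde{L}$ to a joint central limit theorem for the vector of group means $(\bar{d}_{1\cdot},\ldots,\bar{d}_{k\cdot})$, where each $\bar{d}_{i\cdot}$ is a $U$-statistic of order $2$ with kernel $h(x,y) = \tfrac12 d(x,y)$ based on the $n_i$ observations of group $i$. Under $P_1 = \cdots = P_k = P$ with $\int\int d^2\,dP\,dP < \infty$, the kernel $h$ is square-integrable, so the standard Hoeffding decomposition applies: writing $\mu = \EE h(X_1,X_2)$ and $h_1(x) = \EE[h(x,X_2)] - \mu$, we have $\sqrt{n_i}\,(\bar{d}_{i\cdot} - \mu) \inlawto \mathcal{N}(0, 4\gamma)$ where $\gamma = \mathrm{Var}(h_1(X_1)) = \mathrm{Cov}\bigl(\tfrac12 d(X_1,X_2), \tfrac12 d(X_1,X_3)\bigr)$, provided $\gamma > 0$ (the non-degenerate case). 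The factor $4 = 2^2$ is the usual order-of-kernel factor in the $U$-statistic CLT, and the fact that the correct scaling is $\sqrt{n_i}$ rather than $\sqrt{N_i}$ is exactly the point flagged in the text. Since the $k$ groups are independent, the vector $\bigl(\sqrt{n_i}(\bar{d}_{i\cdot}-\mu)\bigr)_{i=1}^k$ converges jointly to a diagonal Gaussian with variances $4\gamma/\lambda_i$ after dividing by $\sqrt{n}$; equivalently $\sqrt{n}(\bar{d}_{i\cdot}-\mu) \inlawto \mathcal{N}(0, 4\gamma/\lambda_i)$ jointly over $i$.

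Next I would handle the numerator and denominator of $\widetilde{L}$ separately and then combine by Slutsky. For the numerator, $\tfrac{1}{n}\sum_{i<j} n_i n_j (\bar{d}_{i\cdot} - \bar{d}_{j\cdot})^2$: substituting $\bar{d}_{i\cdot} - \bar{d}_{j\cdot} = (\bar{d}_{i\cdot}-\mu) - (\bar{d}_{j\cdot}-\mu)$ and using $n_i/n \to \lambda_i$, this is asymptotically equivalent to $\sum_{i<j}\lambda_i\lambda_j\bigl(W_i - W_j\bigr)^2$ where $W_i := \sqrt{n}(\bar{d}_{i\cdot}-\mu)$ and the $W_i$ are jointly asymptotically independent $\mathcal{N}(0,4\gamma/\lambda_i)$. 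A standard algebraic identity rewrites $\sum_{i<j}\lambda_i\lambda_j(W_i-W_j)^2 = \sum_i \lambda_i (W_i - \bar W)^2$ with $\bar W = \sum_i \lambda_i W_i$ (since $\sum_i\lambda_i = 1$); setting $Z_i = W_i/\sqrt{4\gamma/\lambda_i}$ one checks that $\sum_i \lambda_i(W_i - \bar W)^2 = 4\gamma \sum_i (\sqrt{\lambda_i}\,Z_i - \sqrt{\lambda_i}\sum_m \sqrt{\lambda_m}\,Z_m)^2$, which is $4\gamma$ times a quadratic form in an i.i.d.\ standard Gaussian vector $(Z_i)$ whose matrix is the orthogonal projection onto the orthogonal complement of $(\sqrt{\lambda_i})_i$; hence this quadratic form is $\chi^2_{k-1}$-distributed. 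So the numerator of $\widetilde{L}$ converges in distribution to $4\gamma\,\chi^2_{k-1}$.

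For the denominator, I would show $\tfrac{1}{N^*-k}\,T_n \inpto \gamma$. As noted in the excerpt, $\tfrac{1}{N^*-k}T_n$ is a natural pooled estimator of $\gamma$; concretely $T_n = \sum_i \sum_{j_1\notin\{j_2,j_3\}} (d_{i,\{j_1,j_2\}}-\bar d_{i\cdot})(d_{i,\{j_1,j_3\}}-\bar d_{i\cdot})$ and $N^* = \sum_i n_i(n_i-1)^2$ is the number of such index triples $(j_1,j_2,j_3)$ (with $j_2 \ne j_3$ implicitly, accounting for the factor $(n_i-1)^2$). Within each group, $\tfrac{1}{n_i(n_i-1)^2}\sum_{j_1\notin\{j_2,j_3\}} d_{i,\{j_1,j_2\}} d_{i,\{j_1,j_3\}}$ is a $U$-statistic (of order $3$, with appropriate treatment of the repeated pair structure) for $\EE[h(X_1,X_2)h(X_1,X_3)] = \gamma + \mu^2$, which converges in probability by the weak LLN for $U$-statistics; and $\bar d_{i\cdot} \inpto \mu$; subtracting off the cross terms shows the whole expression converges to $\gamma$. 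Averaging over groups with weights $n_i(n_i-1)^2/(N^*-k) \to$ the limiting proportions preserves the limit $\gamma$. Finally, $(N^* - k)/(k-1) \cdot \tfrac{1}{4 T_n} \cdot (\text{numerator})$: writing $\widetilde{L} = \tfrac{1}{k-1}\cdot\tfrac{\text{numerator}}{\tfrac{4T_n}{N^*-k}}$, the Slutsky combination gives $(k-1)\widetilde{L} \inlawto \tfrac{4\gamma\,\chi^2_{k-1}}{4\gamma} = \chi^2_{k-1}$, as claimed.

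The main obstacle I anticipate is the measure-theoretic regularity needed to invoke the $U$-statistic CLT in an abstract metric space rather than $\RR^d$ — this is precisely why the hypothesis demands that the Borel $\sigma$-algebra be countably generated, ensuring that $h_1(x) = \EE[h(x,X_2)]$ is a genuine measurable function and that the Hoeffding projections are well-defined, so that the classical Hoeffding/Hájek theory goes through verbatim. The second genuine (though mild) point is verifying non-degeneracy, i.e.\ $\gamma > 0$: one must argue that $P$ being non-Dirac forces $h_1 = \tfrac12\EE[d(\cdot,X_2)]$ to be non-constant $P$-a.s., which fails only in pathological cases — presumably the authors dispose of this with a short lemma or reference, and I would do likewise (if $\gamma = 0$ the kernel is degenerate and the numerator is $O_P(1/n)$ rather than $O_P(1)$, but then so is $T_n$ at a matching rate, and a separate second-order argument would be required; I expect the paper either rules this out or treats it as a degenerate edge case). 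Everything else — the algebraic projection identity, Slutsky, and the weak LLN for $U$-statistics — is routine.
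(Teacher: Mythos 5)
Your proposal follows essentially the same route as the paper's proof: a Hoeffding-type CLT for the per-group $U$-statistics $\bar{d}_{i\cdot}$ (Theorem~\ref{theo:hoeffding} in the appendix), joint convergence by independence of the groups, the algebraic identity turning $\frac1n\sum_{i<j}n_in_j(\bar d_{i\cdot}-\bar d_{j\cdot})^2$ into a rank-$(k-1)$ idempotent quadratic form (the paper packages your projection argument as $C'(C(D'D)^{-1}C')^{-1}C = D'D-\frac1n\nu\nu'$ in Lemma~\ref{lem:idemcomp} plus Lemma~\ref{lem:cochran}), a law of large numbers for $U$-statistics showing $\frac{1}{N^*-k}T_n\to\gamma^2_{d/2}$ after splitting off the $j_2=j_3$ terms, and Slutsky. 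Your worry about non-degeneracy is well taken: the paper disposes of it only through the ``not a Dirac distribution'' hypothesis together with the non-degeneracy clause of Theorem~\ref{theo:hoeffding}, and does not give the separate lemma you anticipated.
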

\begin{corollary} \label{cor:mainasymp}
  Under the conditions of Theorem~\ref{thm:mainasymp}, we obtain
  \begin{equation*}
     (k-1) \, L \, \inlawto \, \frac{4 \gamma^2}{\sigma^2} \chi^2_{k-1} \quad \text{as $n \to \infty$},
   \end{equation*}
   where with independent $X,Y,Z \sim P$ we have
   \begin{align*}
     \gamma^2 &= \mathrm{Cov}(d(X,Y), d(X,Z)); \\
     \sigma^2 &= \mathrm{Var}(d(X,Y)).
   \end{align*}
\end{corollary}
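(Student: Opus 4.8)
The plan is to deduce the result directly from Theorem~\ref{thm:mainasymp} by comparing the two statistics $L$ and $\widetilde{L}$ term by term. Since both statistics share the same numerator $\frac{1}{n}\sum_{i<j} n_i n_j (\bar d_{i\cdot} - \bar d_{j\cdot})^2$, the entire comparison reduces to understanding the ratio of their denominators, together with the two leading deterministic prefactors $\frac{N-k}{k-1}$ versus $\frac{N^*-k}{k-1}$. Writing
\begin{equation*}
  (k-1)L = \bigl[(k-1)\widetilde{L}\bigr] \cdot \frac{N-k}{N^*-k}\cdot\frac{4\,T_n}{\sum_{i=1}^k\sum_{j=1}^{N_i}(d_{ij}-\bar d_{i\cdot})^2},
\end{equation*}
it suffices, by Slutsky's theorem, to show that the product of the last two factors converges in probability to $4\gamma^2/\sigma^2$; the first factor on the right then carries the $\chi^2_{k-1}$ limit from Theorem~\ref{thm:mainasymp}.

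First I would handle the deterministic prefactor: with $N_i = \binom{n_i}{2}\sim n_i^2/2$ and $N^* = \sum_i n_i(n_i-1)^2 \sim n_i^3$, and using $n_i/n\to\lambda_i$, one gets $N \sim \frac{1}{2}\sum_i n_i^2 = \Theta(n^2)$ while $N^* = \Theta(n^3)$, so $\frac{N-k}{N^*-k}\sim \frac{N}{N^*} \to 0$ at rate $n^{-1}$; more precisely $n\cdot\frac{N-k}{N^*-k} \to \frac{\sum_i\lambda_i^2}{2\sum_i\lambda_i^3}$. Next, the denominator $\sum_i\sum_j (d_{ij}-\bar d_{i\cdot})^2$ is a sum over groups of (one quarter of) the empirical variance of the within-group pairwise distances; by a law of large numbers for $U$-statistics (the relevant kernel has finite variance under the stated second-moment assumption, since $d^2$ is $P\otimes P$-integrable), $\frac{1}{N_i}\sum_j (d_{ij}-\bar d_{i\cdot})^2 \inpto \frac14\sigma^2$, hence $\sum_i\sum_j(d_{ij}-\bar d_{i\cdot})^2 = \frac{\sigma^2}{4}N(1+o_P(1))$. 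Finally, for the numerator $T_n$: each summand $\sum_{j_1\notin\{j_2,j_3\}}(d_{i,\{j_1,j_2\}}-\bar d_{i\cdot})(d_{i,\{j_1,j_3\}}-\bar d_{i\cdot})$ is, after replacing $\bar d_{i\cdot}$ by $\frac12\EE d(X,Y)$ (a negligible perturbation, controllable by the same LLN), a $U$-statistic-type average over triples $(j_1,j_2,j_3)$ whose expectation is $\mathrm{Cov}\bigl(\tfrac12 d(X,Y),\tfrac12 d(X,Z)\bigr) = \gamma^2/4$; since there are $\sim n_i^3 \sim N^*_i$ such triples, $T_n = \frac{\gamma^2}{4}N^*(1+o_P(1))$. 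Combining, the stochastic factor satisfies
\begin{equation*}
  \frac{N-k}{N^*-k}\cdot\frac{4\,T_n}{\sum_{i}\sum_{j}(d_{ij}-\bar d_{i\cdot})^2}
  = \frac{N}{N^*}\cdot\frac{4\cdot\frac{\gamma^2}{4}N^*}{\frac{\sigma^2}{4}N}(1+o_P(1)) = \frac{4\gamma^2}{\sigma^2}(1+o_P(1)),
\end{equation*}
and Slutsky's theorem finishes the proof, provided $\sigma^2>0$ (which holds because $P$ is not a Dirac mass, so $d(X,Y)$ is non-degenerate).

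The main obstacle I anticipate is making the convergence $T_n = \frac{\gamma^2}{4}N^*(1+o_P(1))$ rigorous and clean. The quantity $T_n$ is not itself a genuine $U$-statistic because the centering uses the data-dependent $\bar d_{i\cdot}$ rather than a fixed constant, and the index set $\{j_1\notin\{j_2,j_3\}\}$ allows $j_2=j_3$, so $T_n$ mixes a "diagonal" part ($j_2=j_3$, giving $\sum(d_{i,\{j_1,j_2\}}-\bar d_{i\cdot})^2$ over ordered pairs) with an "off-diagonal" part (genuine triples). The expected-value bookkeeping must keep track of which part dominates: the diagonal part has $\Theta(n_i^2)$ terms and the off-diagonal part has $\Theta(n_i^3)$ terms, so the off-diagonal part dominates and contributes the $\gamma^2$; but one must verify the diagonal part is genuinely lower order (it is, being $O(N_i)=O(n^2)$ against $N^*_i=\Theta(n^3)$) and that replacing $\bar d_{i\cdot}$ by $\frac12\EE d(X,Y)$ costs only $o_P(N^*)$. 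All of this is routine given the finite-second-moment hypothesis and standard $U$-statistic variance bounds, and in fact much of this machinery will already have been developed in the proof of Theorem~\ref{thm:mainasymp}, which I would simply cite.
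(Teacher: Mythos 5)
Your proof is correct and follows essentially the same route as the paper's: both factor $(k-1)L = (k-1)\widetilde{L}\cdot\frac{N-k}{N^*-k}\cdot\frac{4\,T_n}{\sum_{i}\sum_{j}(d_{ij}-\bar{d}_{i\cdot})^2}$ and conclude by Slutsky's theorem after noting that $\frac{1}{N^*-k}T_n$ and $\frac{1}{N-k}\sum_{i}\sum_{j}(d_{ij}-\bar{d}_{i\cdot})^2$ are consistent estimators of $\gamma^2/4$ and $\sigma^2/4$, respectively, the former already established in display~\eqref{eq:consistent} of the proof of Theorem~\ref{thm:mainasymp}. Your extra bookkeeping on the orders of $N$ and $N^*$ and the diagonal/off-diagonal split of $T_n$ simply makes explicit what that display contains, so citing it, as you propose, suffices.
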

\begin{proof}[Proof of Theorem~\ref{thm:mainasymp}]
  Under the null hypothesis our data is generated by independent $\mcx$-valued random elements $X_{ij} \sim P$, $1\leq j \leq n_i$, $1 \leq i \leq k$, and the distances $d_{i,\{j_1,j_2\}}$ are realizations of the random variables $\frac{1}{2}d(X_{ij_1},X_{ij_2})$, $1\leq j_1 < j_2 \leq n_i$, $1 \leq i \leq k$. Under the conditions on $P$ we have asymptotic normality of the $U$-statistics
 \begin{align}
   U_{i} = U_i^{(n)} = {\binom{n_i}{2}}^{-1} \sum_{j_1,j_2=1}^{n_i,<} \tfrac{1}{2}d(X_{ij_1},X_{ij_2}), \quad i = 1, \ldots, k
 \end{align}
by a straightforward generalization of Hoeffding's theorem to random elements in $\mcx$, see The\-o\-rem~\ref{theo:hoeffding} in the appendix. More precisely, we have with $X,Y,Z \sim P$ independent that
\begin{equation}  \label{eq:hoeffding_an_uni}
   \sqrt{n_i} \bigl( U_{i} - \tfrac12 \mathbb{E} d(X,Y) \bigr) \inlawto \mathcal{N}(0,\gamma^2) \quad \text{as $n_i \to \infty$},
\end{equation}
where $\gamma^2 = \mathrm{Cov}(d(X,Y), d(X,Z)) = \mathrm{Var}(\mathbb{E} (d(X,Y) \, \vert \, X)) = 4\gamma_{h}^2$ in the notation of the appendix with $h = \frac12 d$. 
In view of the 1-way ANOVA construction, on which $L$ is based, we define the ``design matrix'' $D = D_n \in \RR^{n\times k}$ by
\begin{align}\label{matrixD}
    D^\prime := \begin{pmatrix}
        1 & \ldots & 1 & 0 & \cdots & 0 & \cdots & 0 & \cdots & 0
        \\
        0 & \cdots & 0 & 1 & \ldots & 1 & \cdots & 0 & \cdots & 0
        \\
         &  \vdots & & & & & \ddots & & \vdots &
        \\
        0 & \cdots & 0 & 0 & \cdots & 0 & \cdots & 1 & \ldots & 1
    \end{pmatrix} \in \RR^{k\times n},
\end{align}
where the $i$-th row has exactly $n_i$ ones, and the ``contrast matrix''
\begin{align}\label{matrixC}
    C := \begin{pmatrix}
        1 & 0  & \cdots & 0 & -1
        \\
        0 & 1  &   & 0 & -1
        \\
        \vdots &  & \ddots &  & \vdots
        \\
        0 & 0 &  & 1 & -1
    \end{pmatrix} \in \mathbb{R}^{(k-1)\times k}.
\end{align}
Setting $\Delta = \lim_{n \to \infty} \frac{1}{n} D_n'D_n = \mathrm{diag}(\lambda_1,\ldots,\lambda_n)$, we obtain with $U = U^{(n)} = (U_1, \ldots, U_k)'$  by independence of the components and $n_i \to \infty$ as $n \to \infty$ (since $\lambda_i > 0$) that
\begin{equation}  \label{eq:hoeffding_an_mult}
   Z_n := \gamma^{-1} \sqrt{n} \, \Delta^{1/2} \bigl( U - \EE U \bigr) \inlawto \mathcal{N}_k(0,I_k) \quad \text{as $n \to \infty$}.
\end{equation}
Setting $\nu = (n_1, \ldots, n_k)'$, we may further compute $C'(C(D'D)^{-1}C')^{-1} C = D'D - \frac{1}{n} \nu \nu'$ (see Lemma~\ref{lem:idemcomp} in the Appendix for the calculation) and therefore
\begin{equation}
  \widetilde{L} = \frac{N^*- k}{k-1} \frac{U' C'(C(D_n'D_n)^{-1}C')^{-1} C U}{4 \, T_n}.
\end{equation}
Since $\EE U = \frac{1}{2} \mathbb{E} d(X,Y) \cdot \mathds{1} \in \RR^k$ and $C \cdot \mathds{1} = 0$, we obtain
\begin{equation*}
  (k-1) \widetilde{L} = \gamma^2 \frac{Z_n' \bigl(\tfrac{1}{n}W_n \bigr) Z_n}{\frac{4}{N^*-k} T_n},
\end{equation*}
where $W_n := \Delta^{-1/2} C'(C(D_n'D_n)^{-1}C')^{-1} C \Delta^{-1/2}$. Note that $$W := \lim_{n \to \infty} \frac{1}{n} W_n = \Delta^{-1/2} C'(C \Delta^{-1} C')^{-1} C \Delta^{-1/2}$$ is a symmetric and idempotent matrix of rank $k-1$, and therefore $Z' W Z \sim \chi^2_{k-1}$ for $Z \sim \mcn_k(0,I_k)$ by Lemma~\ref{lem:cochran} from the Appendix. Using \eqref{eq:hoeffding_an_mult} it is straightforward to show with the help of the continuous mapping theorem that
\begin{equation*}
  Z_n' \bigl(\tfrac{1}{n}W_n \bigr) Z_n \inlawto \chi^2_{k-1}.
\end{equation*}

So it suffices to show that $\frac{1}{N^*-k} T_n \inpto \gamma_{d/2}^2$.
For this we note that the normalized inner sum of \eqref{eq:hatgamma_ultimate} satisfies
\begin{align}
  &\frac{1}{n_i(n_i-1)^2}
  \sum_{\substack{j_1,j_2,j_3=1 \\ j_1 \not\in\{j_2,j_3\}}}^{n_i} \bigl(d_{i,\{j_1,j_2\}} - \bar{d}_{i\cdot}\bigr) \bigl( d_{i,\{j_1,j_3\}} - \bar{d}_{i\cdot}\bigr) \notag\\
  &\hspace*{6mm} = \underbrace{\frac{n_i(n_i-1)(n_i-2)}{n_i(n_i-1)^2}}_{\longrightarrow \, 1} \underbrace{\frac{1}{n_i(n_i-1)(n_i-2)}
  \sum_{j_1,j_2,j_3=1}^{n_i,\, \neq} \bigl(d_{i,\{j_1,j_2\}} - \bar{d}_{i\cdot}\bigr) \bigl( d_{i,\{j_1,j_3\}} - \bar{d}_{i\cdot}\bigr)}_{\longrightarrow \, \mathrm{Cov}(\frac12 d(X_1,X_2), \frac12 d(X_1,X_3)) \, = \, \gamma^2_{d/2}} \label{eq:consistent}\\
  &\hspace*{12mm} {} + \underbrace{\frac{1}{n_i-1}}_{\longrightarrow \, 0} \underbrace{\frac{1}{n_i(n_i-1)} \sum_{j_1,j_2}^{n_i,\, \neq} \bigl(d_{i,\{j_1,j_2\}} - \bar{d}_{i\cdot}\bigr)^2}_{\longrightarrow \, \mathrm{Var}(\frac12 d(X_1,X_2)) = \sigma^2_{d/2}}, \notag
\end{align}
where convergence of the averages is almost surely and follows after expansion of the products by the strong law of large numbers for $U$-statistics using  the prerequisite $\EE(d(X_1,X_2)^2) < \infty$; see \cite{hoeffding1961strong}.

Thus for the total term
\begin{align*}
  \frac{1}{N^*-k} T_n &= \frac{1}{N^*-k}\sum_{i=1}^k n_i(n_i-1)^2\cdot\frac{1}{n_i(n_i-1)^2}
  \sum_{\substack{j_1,j_2,j_3=1 \\ j_1 \not\in\{j_2,j_3\}}}^{n_i} \bigl(d_{i,\{j_1,j_2\}} - \bar{d}_{i\cdot}\bigr) \bigl( d_{i,\{j_1,j_3\}} - \bar{d}_{i\cdot}\bigr) \longrightarrow \gamma^2_{d/2}.
\end{align*}
\end{proof}

\begin{proof}[Proof of Corollary~\ref{cor:mainasymp}]
  This follows from Theorem~\ref{thm:mainasymp} because
  \begin{equation*}
    L = 4 \frac{\frac{1}{N^*-k} T_n}{\frac{1}{N-k} \sum_{i=1}^{k}\sum_{j=1}^{N_i}(d_{ij} - \bar{d}_{i\cdot})^2} \widetilde{L},
  \end{equation*}
  where the numerator is a consistent estimator of $\gamma^2/4$ and the denominator is a consistent estimator of $\sigma^2/4$, see~\eqref{eq:consistent}.
\end{proof}

\section{Metric Space of Finite Point Patterns}
\label{sec:pp-space}

In Sections \ref{sec:simulstud} and \ref{sec:realdata} we apply the four statistics from Table~\ref{tab:DifferentStatistics} for the space of finite point patterns equipped with the metric introduced in~\cite{muller2020metrics}. For self-containedness we give a short summary of the relevant concepts and results, referring to the paper as MSM20.

For $n \in \ZZ_+$ write $[n] = \{1,2,\ldots , n\}$ (including $[0] = \emptyset$).
Denote by $\mfnfin$ the space of finite multisets on a complete separable metric space $(\mcr,\varrho)$. We refer to the elements $\xi = \lbrace x_1, x_2, \ldots , x_n \rbrace \in \mfnfin$ as point patterns, where $n \in \ZZ_{+} = \{0,1,2,\ldots\}$ and $x_i \in \mathcal{X}$, $i \in [n]$. Note that $x_i = x_j$ for $i \neq j$ is allowed and that the point patterns can be identified with the counting measure $\sum_{i=1}^n \delta_{x_i}$, which is often helpful for theoretical considerations. We write $\abs{\xi}$ to denote the total number of points in the pattern $\xi$.

\begin{de}[Definition 1 of~MSM20]
Let $C > 0$ and $p \geq 1$ be two parameters, referred to as \emph{penalty} and \emph{order}, respectively.

\nin
For $\xi = \lbrace x_1, \ldots , x_m \rbrace, \eta = \lbrace y_1, \ldots , y_n \rbrace \in \mfnfin$ define the \emph{transport-transform (TT) metric} by
\begin{equation}
    \label{eq:ttdef}
    \begin{split}
    \dtt(\xi,\eta) = \dtt^{(C,p)}(\xi,\eta)
    =\biggl( \min \biggl( (m+n-2l) C^p +\sum_{r=1}^{l} \varrho(x_{i_r},y_{j_r})^p \biggr) \biggr)^{1/p},
    \end{split}
\end{equation}
where the minimum is taken over equal numbers of pairwise different indices $i_1,\ldots,i_l$ in $[m]$ and $j_1,\ldots,j_l$ in $[n]$, i.e.\ over the set
\begin{equation*}
    \begin{split}
        S(m,n) = \bigl\{ (i_1,&\ldots,i_l;\hspace*{1.5pt} j_1,\ldots,j_l)\,;\; l \in \{0,1,\ldots,\min\{m,n\}\},\\
        &i_1, \ldots, i_l \in [m] \text{ pairwise different},\, j_1, \ldots, j_l \in [n] \text{ pairwise different} \bigr\}.
    \end{split}
\end{equation*}
\end{de}

The distance $\dtt(\xi,\eta)$ can be computed by filling up the smaller point pattern with dummy points located at distance $C$ 
until it has the same cardinality $n$ as the larger point pattern and then solving a standard assignment problem with cost $\min\{d(x,y), 2^{1/p}C\}$ between points $x, y$ (MSM20, Theorem~1). The classical worst-case complexity of this is $O(n^3)$ (MSM20, Remark~1), which can be somewhat improved to order $n^{2.5}$ up to polylogarithmic factors (\citealp{lee2014path}). Practical computation times for well over $n=1000$ points are less than one second (R package \textsf{ttbary}, \citealp{ttbary}, using the auction algorithm from \citealp{bertsekas1988auction}).

The TT-metric can be interpreted as an unbalanced Wasserstein metric (Remark~3). 
Com\-put\-ing Fr\'echet means in Wasserstein spaces is a topic of active research; see e.g.\ \cite{borgwardt2020improved}, \cite{borgwardt2021computational}, \cite{heinemann2021} and references therein for recent developments the space of discrete measures. In our context an additional increase in difficulty comes from the constraint that the result must be a discrete measure with integer cardinality. In~MSM20 we therefore apply an alternating heuristics to obtain local minima of the Fr\'echet functional in~\eqref{eq:frechetmean}. The resulting ``pseudo-barycenters'' are obtained much faster and appear to be of good quality (consistent objective function values and results conform with intuition), but are by no means perfect and still require considerable computation time for hundreds of patterns with hundred of points (Table~1--4 in~MSM20).

A related metric that we take up in Section~\ref{sec:realdata} is the relative TT-metric defined as
\begin{equation}
    \label{eq:rttdef}
      \drtt(\xi,\eta) = \drtt^{(C,p)}(\xi,\eta) = \frac{1}{\max\{\abs{\xi},\abs{\eta}\}^{1/p}} \dtt^{(C,p)}(\xi,\eta).
\end{equation}
This metric is in a sense more robust to individual outliers if there are many points. In particular note that $\drtt(\xi_N,\xi_N\cup\zeta) \to 0$ as $N \to \infty$ if $\abs{\xi_N} \to \infty$ and $\zeta$ is a fixed point pattern.

In view of the conditions for Theorem~\ref{thm:mainasymp}, completeness and separability are inherited from $(\mcx,\varrho)$ to $(\mfnfin,\dtt)$ and $(\mfnfin,\drtt)$. This is straightforward to see after checking that $\drtt(\xi_N,\xi) \to 0$ iff $\dtt(\xi_N,\xi) \to 0$ iff $\abs{\xi_N} \to \abs{\xi}$ and each point $x$ of $\xi$ is approximated by exactly one point of $\xi_N$ (if $x$ is a multipoint of cardinality $k$ this means that there is a total of exactly $k$ points in $\xi_N$, possibly forming multipoints of their own, that converge towards $x$). The condition $\int_{\mcx} \int_{\mcx} d(x,y) \, P(dx) \, P(dy) < \infty$ is always satisfied for $\drtt$ because it is bounded by $C$. Since $\dtt(\xi,\eta) \leq C \max\{\abs{\xi},\abs{\eta}\}^{1/p}$ it is satisfied for $\dtt$ if $\Xi \sim P$ satisfies $\EE\abs{\Xi}^{2/p} < \infty$, 
which is for example the case for all point process distributions considered in Section~\ref{sec:simulstud}.

For the simulation study in the next section it is helpful to understand some basic probability measures on $\mfnfin$. Suppose that $\mcr \subset \Rd$ is compact (in the next section we only use a unit square in $\RR^2$). A random element in the metric space $(\mfnfin,\drtt)$, equipped with its Borel $\sigma$-algebra is called a \emph{point process}, i.e.\ a point process is a measurable map from a probability space to $\mfnfin$. The Borel $\sigma$-algebra coincides with the smallest $\sigma$-algebra that makes $\xi \mapsto \xi(A)$ measurable for every measurable $A \subset \mcr$, which is the usual $\sigma$-algebra considered on $\mfnfin$; see Proposition~9.1.IV in~\cite{dvj2}.

We say a point process $\Xi$ satisfies \emph{complete spatial randomness (CSR)} if it is a Poisson process with intensity measure $\nu = \lambda \, \Leb^d$, where $\lambda \geq 0$ and $\Leb^d$ is Lebesgue measure (on $\mcr$). This means that $\Xi(A) \sim \mathrm{Po}(\nu(A))$ for all measurable $A \subset \mcr$ and that $\Xi(A_1),\ldots,\Xi(A_l)$ are independent for all $l \in \NN$ and all measurable $A_1, \ldots, A_l \subset \mcr$ that are pairwise disjoint. See e.g.\ Section~2.4 in~\cite{dvj1} for more details on the Poisson process.

\section{Simulation Study} \label{sec:simulstud}

We tested the different statistics from Table~\ref{tab:DifferentStatistics} for various point process distributions and present the results in what follows. 
First we investigate the practical use of our asymptotics in Subsection~\ref{subsec:simulstud_asymp}. In spatial statistics there are usually two fundamentally different ways how dis\-tri\-bu\-tions can deviate from CSR. 
One is spatial inhomogeneity of points, i.e.\ points may be more or less likely to occur in different regions of the space. The ability of tests to detect deviations from CSR against various spatially inhomogeneous alternatives is studied in Subsection~\ref{subsec:simulstud_inhomo}. The other way is interaction of points, i.e.\ presence of points in one region may excite or inhibit the presence of other points nearby. In Subsection~\ref{subsec:simulstud_interact} we study how well the statistics discern between various interaction strengths in homogeneous Strauss processes.

For the evaluations in Subsections~\ref{subsec:simulstud_inhomo} and~\ref{subsec:simulstud_interact} we perform permutation tests. These are based on generating $M$ independent uniform permutations of the indices of the data points resulting in alternative split-ups of the data into $k$ groups of sizes $n_i$, $1 \leq i \leq k$. We then determine the rank $r$ of the statistic-value for the original split-up within the statistic-values of the alternative split up (from $r = 1$ for the highest value to $r = M+1$ for the lowest value). It is easily checked (and well-known) that $p = \frac{r}{M+1}$ is an honest p-value (i.e.\ $\mathbb{P}(p \leq \alpha) \leq \alpha$ for every $\alpha \in (0,1)$). We reject the null if $p \leq 0.05$.

In Subsections~\ref{subsec:simulstud_inhomo} and~\ref{subsec:simulstud_interact} we have $k=2$, $n_i = \tn = 20$ and use $M = 999$ permutations if no barycenter computation is needed and $M = 99$ permutations if barycenter computation is needed. In view of the $\binom{40}{20} \approx 1.4 \cdot 10^{11}$ possible split-ups, this means that there is a high degree of randomization in each individual test. The small $M=99$ was necessary due to the large computational burden of computing pseudo-barycenters in point pattern space (see Section~\ref{sec:pp-space}). For statistics that do not require barycenter computation, choosing $M=999$ typically results in much faster computation time than the choice of $M=99$ for statistics that do require barycenter computation. For reproducibility of individual test results, a higher $M$ or (where possible) comparing within all possible split-ups into groups would be desirable in both cases.

Preferring exact permutation tests over tests based on the limit $\chi^2$-distribution is in agree\-ment with the recommendations from previous papers and corresponds to our own experience. 
However, the $\chi^2$-approximation of our $L$-statistic is quite fast as we can see in Subsection~\ref{subsec:simulstud_asymp}, where we compare the finite sample distributions of the new $L$- and the Dubey--Müller statistics.

In all tests we use as the underlying space $\mcr=[0,1]^2 \subset \RR^2$ with the Euclidean metric. The significance level is always $\alpha=0.05$. Furthermore we choose as order $p=2$ and as penalty $C=0.25$, which means that $\sqrt{2} \cdot 0.25 \approx 0.35$ is the maximal contribution that a single matched point pair makes to the TT-distance, i.e.\ the actual Euclidean distances are cut off at this value. In applications the choice of $C$ is often based on the physical reality of the data and possibly the goal of the analysis. For the present simulation study we tried not to restrict a substantial proportion of matching distances while keeping the contribution of additional points reasonably low. Table~\ref{tab:CompareCutoffsPois} gives an overview of how many pairs are matched above and below the cutoff distance 
for various values of $C$ based on pairwise comparisons of 1000 point patterns simulated according to CSR with intensity $\lambda=35$. For $C=0.25$ we have for every matching above the cutoff distance 
$1/0.038 \approx 26$ matchings below the cutoff distance.

\begin{table}[!htb]
    \begin{center}
        \begin{tabular}{|c||C{5.5em}|C{5.5em}|C{5.5em}|C{5.5em}|}
            \hline
               & mean $\dtt$ & below cutoff & above cutoff & unpaired \\
            \hline
            \hline
            $C=0.1$ & 0.309 & 11123388 & 4469722 & 3309250\\
            \hline
            relative &  & 1 & 0.424 & 0.311 \\
            \hline
            \hline
            $C=0.15$ & 0.393 & 13456877 & 2136233 & 3309250 \\
            \hline
            relative &  & 1 & 0.167 & 0.257 \\
            \hline
            \hline
            $C=0.2$ & 0.457 & 14514420 & 1078690 & 3309250 \\
            \hline
            relative &  & 1 & 0.078 & 0.24 \\
            \hline
            \hline
            $C=0.25$ & 0.512 & 15054688 & 538422 & 3309250 \\
            \hline
            relative &  & 1 & 0.038 & 0.233 \\
            \hline
            \hline
            $C=0.3$ & 0.561 & 15347529 & 245581 & 3309250 \\
            \hline
            relative &  & 1 & 0.017 & 0.23 \\
            \hline
            \hline
            $C=0.35$ & 0.609 & 15498175 & 94935 & 3309250 \\
            \hline
            relative &  & 1 & 0.006 & 0.229 \\
            \hline
        \end{tabular}
    \end{center}
    \vspace{-4mm}

    \caption{\label{tab:CompareCutoffsPois} Pairwise comparison within $1000$ patterns simulated from CSR on $[0,1]^2$ with intensity $\lambda=35$ for various penalties $C$. The columns give the mean $\dtt$-distance, and the (absolute and relative) number of matchings below cutoff and above cutoff, as well as the number of unpaired points for these $\binom{1000}{2}$ pairwise comparisons. Note that the relative numbers are with respect to the number of matchings below cutoff.}
\end{table}

\subsection{Asymptotics}
\label{subsec:simulstud_asymp}

In the present subsection we numerically assess the speed of convergence of our new $\widetilde{L}$ statistic under the null hypothesis of equal group distributions towards the $\chi^2_{k-1}$ distribution as presented in Subsection~\ref{subsec:limitdistr}. For comparison we also consider the Fr\'echet $T_L$ and $T$ statistics, which were shown in \cite{dubey2019frechet} to have a limiting $\chi^2_{k-1}$ distribution as well.

Our experiments are based on $k=2$ groups, both simulated from the same distribution, which is either $\csr(35)$ or the Strauss hard core distribution with $\lambda=35$. These are the extreme distributions having either no interaction or very strong interaction in Subsection~\ref{subsec:simulstud_interact}. As group size we consider $\tn = 5,20,50,200$.
The computation of the Fr\'echet $T$ and $T_L$ depend on the calculation of a barycenter. For this we used the heuristic algorithm presented in \cite{muller2020metrics}. The calculation of an exact barycenter is computationally infeasible for this kind of data. To compensate that we do not get the optimal solution, we did $5$ restarts in every barycenter calculation and used the best of the $5$ solutions as the barycenter. 

Figure~\ref{fig:AsymptoticsResults_CSR} shows QQ-plots for the empirical distributions of our new Levene statistic $\widetilde{L}$, the Fr\'echet statistic $T_L$ and the Fr\'echet statistic $T$ on the $y$-axis and the theoretical $\chi_1^2$ distribution on the $x$-axis.
The data are the $\csr(35)$ point patterns. In the first column the groups consist of $\tilde{n} = 5$ patterns, in the second column of $\tilde{n} = 20$ patterns and so on. 
For the two Levene statistics $\widetilde{L}$ and $T_L$ we can see the computed quantiles approach the theoretical quantiles as the group size $\tilde{n}$ gets larger.
Even for a medium group size $\tilde{n}=50$ the computed quantiles are very close to the theoretical quantiles of a $\chi_1^2$ distribution.

Similarly Figure~\ref{fig:AsymptoticsResults_Str} shows QQ-plots for hardcore Strauss distributed point patterns. Again the four columns correspond to the four group sizes $\tilde{n}=5,20,50,200$ and the three rows correspond to the three statistics.
For this data the computed quantiles are already very close to the theoretical quantiles of a $\chi_1^2$ distribution for $\tilde{n} =20$ for the two Levene statistics.

In both cases the third row, the combined Fr\'echet statistic $T$, yields quantiles that are far from the theoretical quantiles. This is solely due to the summand $T_F$ that is not considered in the second row.

In spite of the asymptotic results we use permutation based tests in what follows. 
This is in the tradition of previous methods, see e.g. \cite{anderson2017permutational}, \cite{dubey2019frechet}.
Comparison of different statistics for different data sets are presented in the following two sub\-sec\-tions.

\begin{figure}[!htb]
  \begin{center}
    \includegraphics[width=0.98\textwidth]{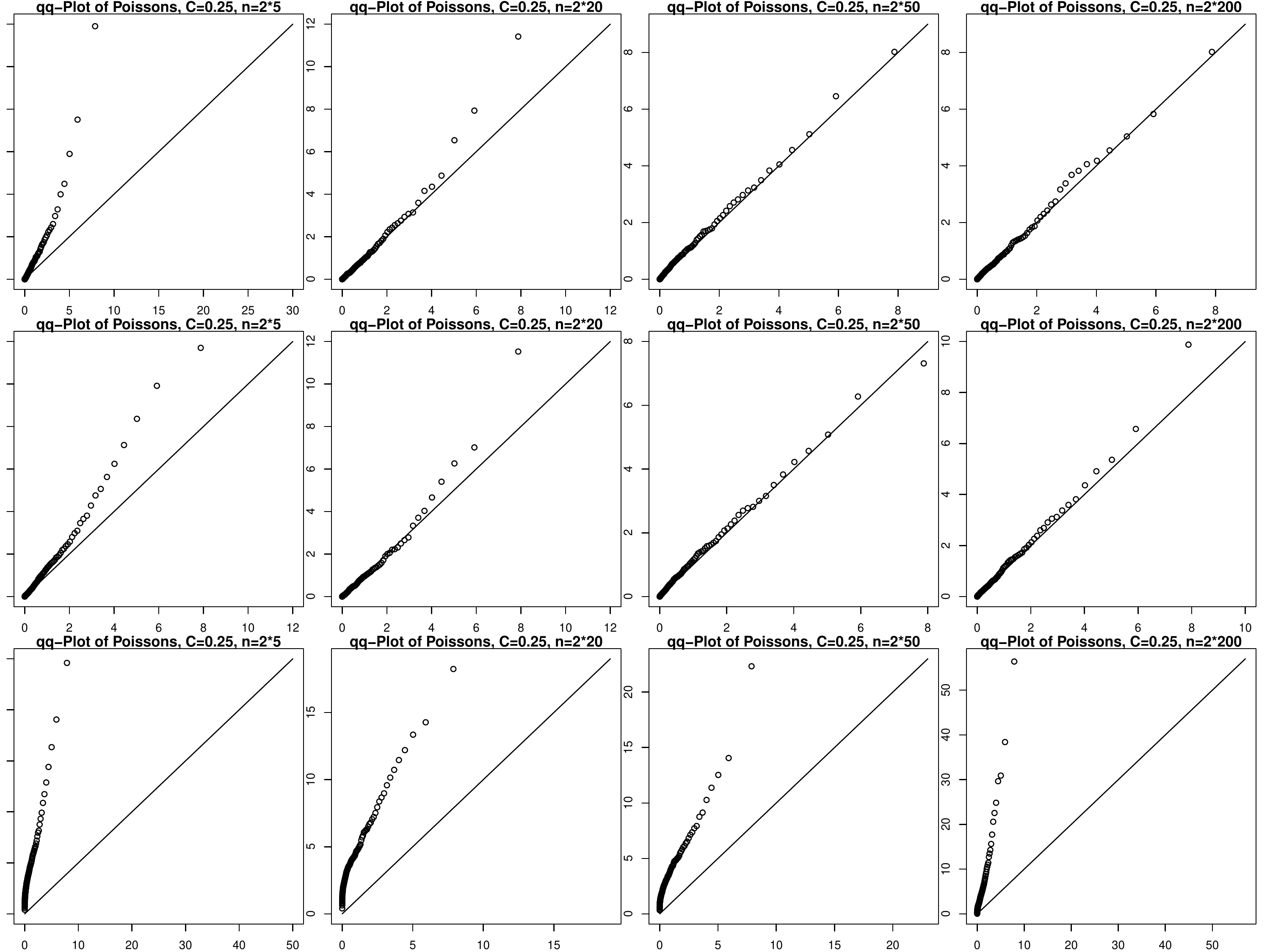}
  \end{center}
  \vspace*{-4mm}
     \caption{QQ-plots of the percentiles based on 500 statistics values (on the $y$-axis) versus $\chi^2_1$-percentiles. Based on $k=2$ groups of $\tn = 5,20,50,200$ patterns from $\csr(35)$. 
     The first row is our new $\widetilde{L}$ statistic \eqref{stat:distlevenetilde}, the second and third rows are the Fr\'echet $T_L$ statistic from Section \ref{subsec:DM} and the Fr\'echet $T$ statistic, respectively. }
    \label{fig:AsymptoticsResults_CSR}
\end{figure}
\begin{figure}[!htb]
  \begin{center}
    \includegraphics[width=0.98\textwidth]{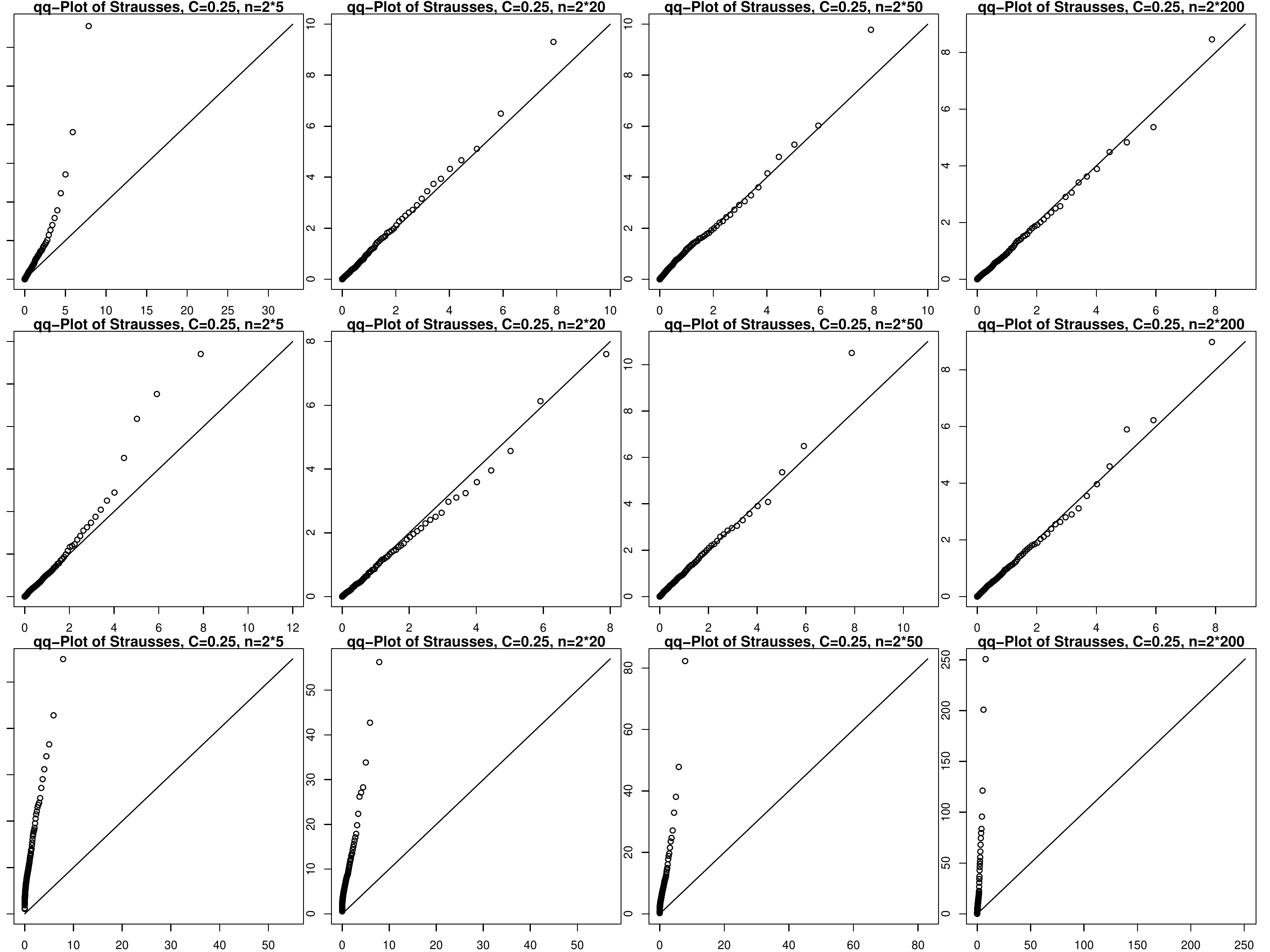}
  \end{center}
  \vspace*{-4mm}
     \caption{QQ-plots of the percentiles based on 500 statistics values (on the $y$-axis) versus $\chi^2_1$-percentiles. Based on $k=2$ groups of $\tn = 5,20,50,200$ patterns from 
     a Strauss hard core distribution with $\lambda=35$. 
     The first row is our new $\widetilde{L}$ statistic \eqref{stat:distlevenetilde}, the second and third rows are the Fr\'echet $T_L$ statistic from Section \ref{subsec:DM} and the Fr\'echet $T$ statistic, respectively.}
    \label{fig:AsymptoticsResults_Str}
\end{figure}

\subsection{Inhomogeneity}
\label{subsec:simulstud_inhomo}

Here we compare $k=2$ groups of $\tn = 20$ point patterns.
Patterns in Group~2 are simulated from CSR with $\lambda = 35$. In Group 1, they are simulated from various inhomogeneous scenarios, i.e.\ from Poisson process distributions where the intensity function (the density of the measure $\nu$ with respect to Lebesgue measure) deviates more or less from a constant but still integrates up to $35$ over the whole window $\mcr=[0,1]^2$.

In Scenarios 1--3 the intensity is obtained by adding a number of rotation-invariant Gaussian distributions with different means but the same covariance matrix $\sigma^2 I$ and scaling to total mass~$35$. For simplicity we do not restrict the intensity to $\mcr$, but as can be seen from Figure~\ref{fig:InhomoScenarios} only very few points outside $\mcr$ occur. Scenarios 4--6 use as intensity an exponential function that is constant in the $y$-coordinate and induces a certain tendency for points to lie in the left part of the window rather than in the right part.

Table~\ref{tab:InhomoScenarios} provides more information about the chosen parameters. Figure~\ref{fig:InhomoScenarios} shows five example point patterns for each scenario. In addition we add a Scenario~0, which corresponds to simulating the first group also from CSR with $\lambda = 35$.
\begin{table}[!htb]
  \renewcommand{\arraystretch}{1.3}
    \begin{center}
        \begin{tabular}{|c||l|}
            \hline
            Scenario & $\lambda(x,y)$ proportional to\\
            \hline
            $1$ & $\sum_{i=1}^3 \varphi_{\mu_i, 0.075}(x,y)$\\
            \hline
            $2$ & $\sum_{i=1}^3 \varphi_{\mu_i, 0.1}(x,y)$\\
            \hline
            $3$ & $\sum_{i=1}^4 \varphi_{\mu_i, 0.1}(x,y)$\\
            \hline
            $4$ & $\exp(-2x)$ \\ 
            \hline
            $5$ & $\exp(-1x)$\\ 
            \hline
            $6$ & $\exp(-0.02x)$\\ 
            \hline
        \end{tabular}
    \end{center}
      \vspace{-4mm}

    \caption{\label{tab:InhomoScenarios} Overview of the Poisson process intensities for the six scenarios. The proportionality constant is chosen such that the expected number of points in each scenario is $35$. By $\varphi_{\mu,\sigma^2}$ we denote the density of the bivariate normal distribution with mean $\mu \in \RR^2$ and covariance matrix~$\sigma^2 I$. The different $\mu_i$ used are seen in Figure~\ref{fig:InhomoScenarios}.}
\end{table}

\begin{figure}[!htb]
    \hspace*{-1em}
    \includegraphics[width=0.4\textwidth]{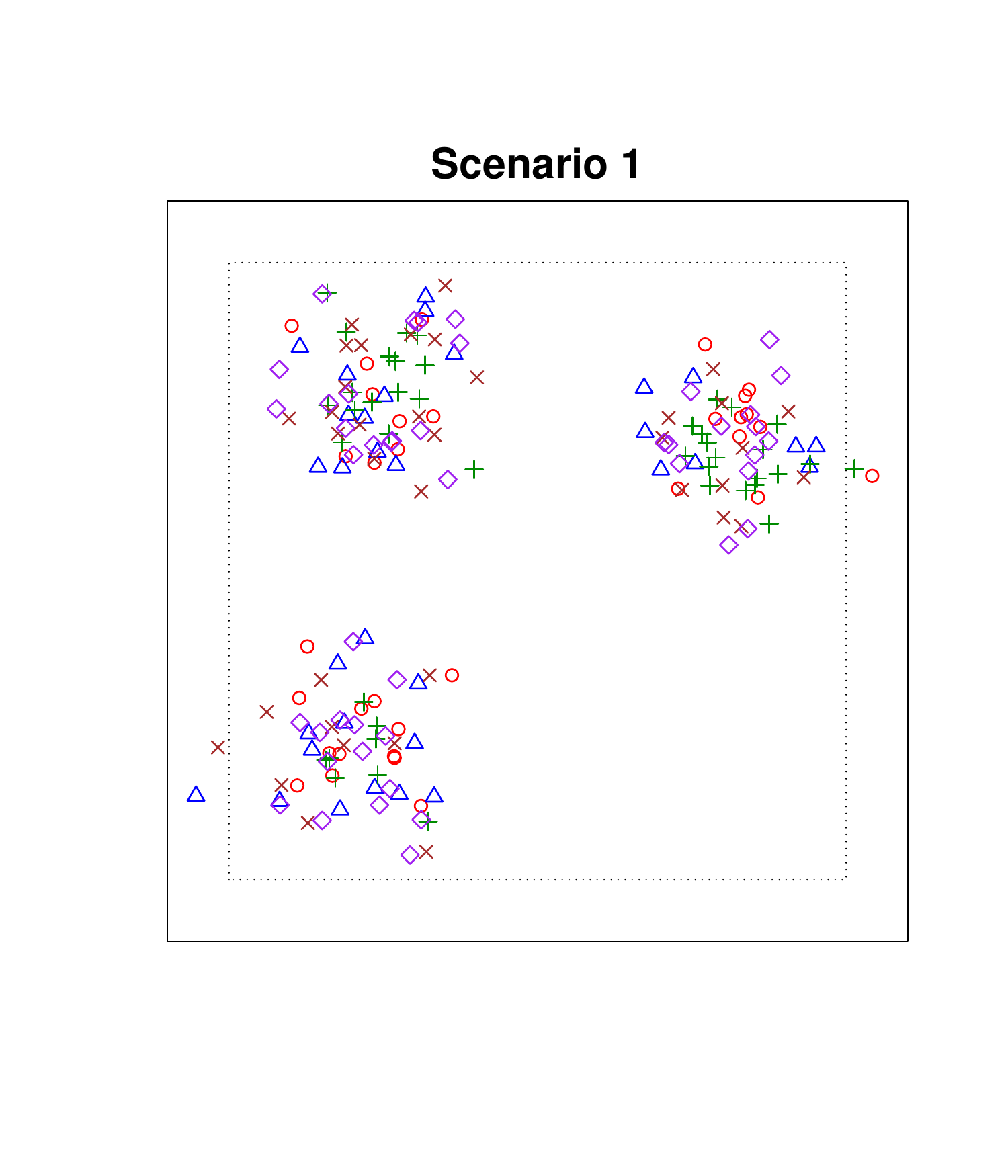}\hspace*{-4em}
    \includegraphics[width=0.4\textwidth]{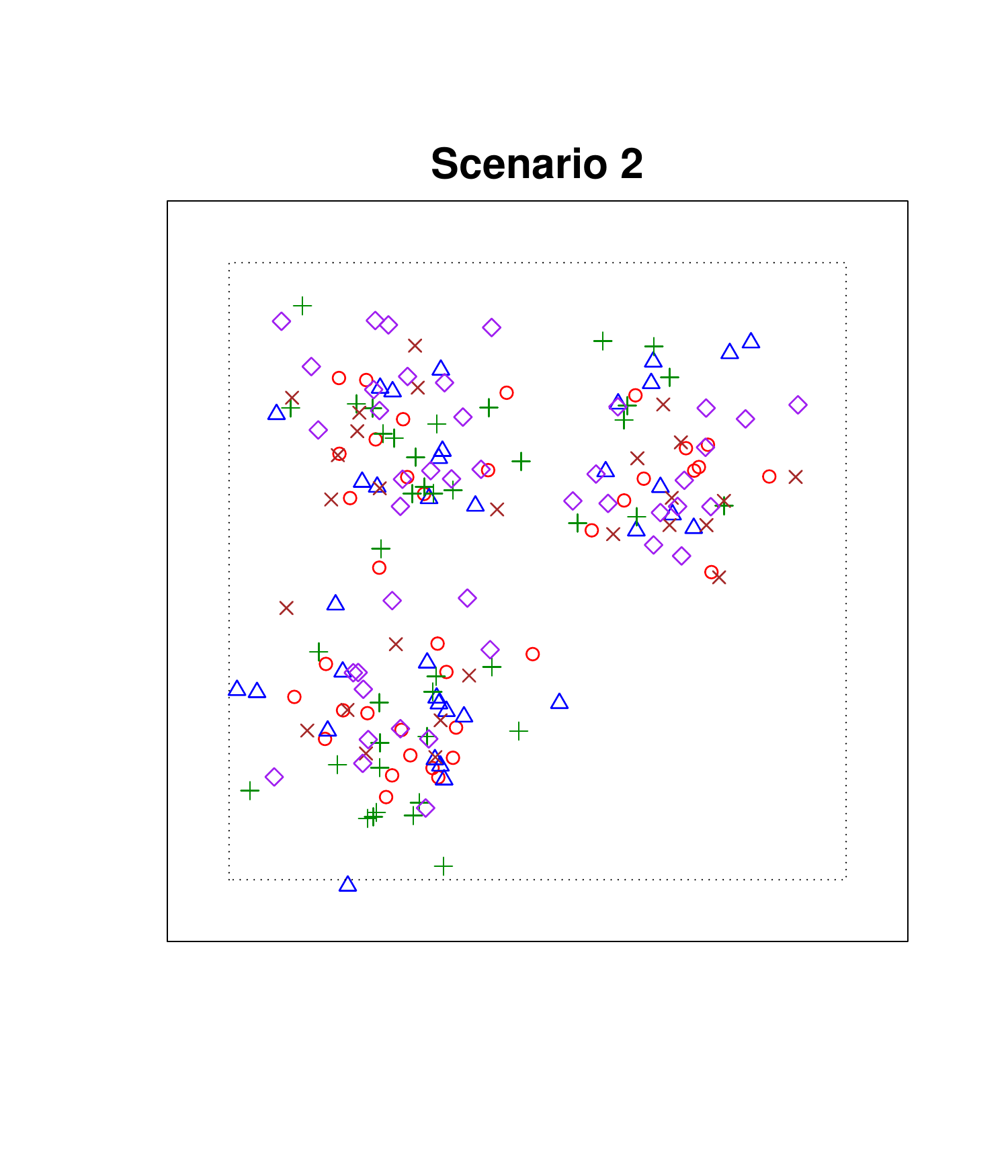}\hspace*{-4em}
    \includegraphics[width=0.4\textwidth]{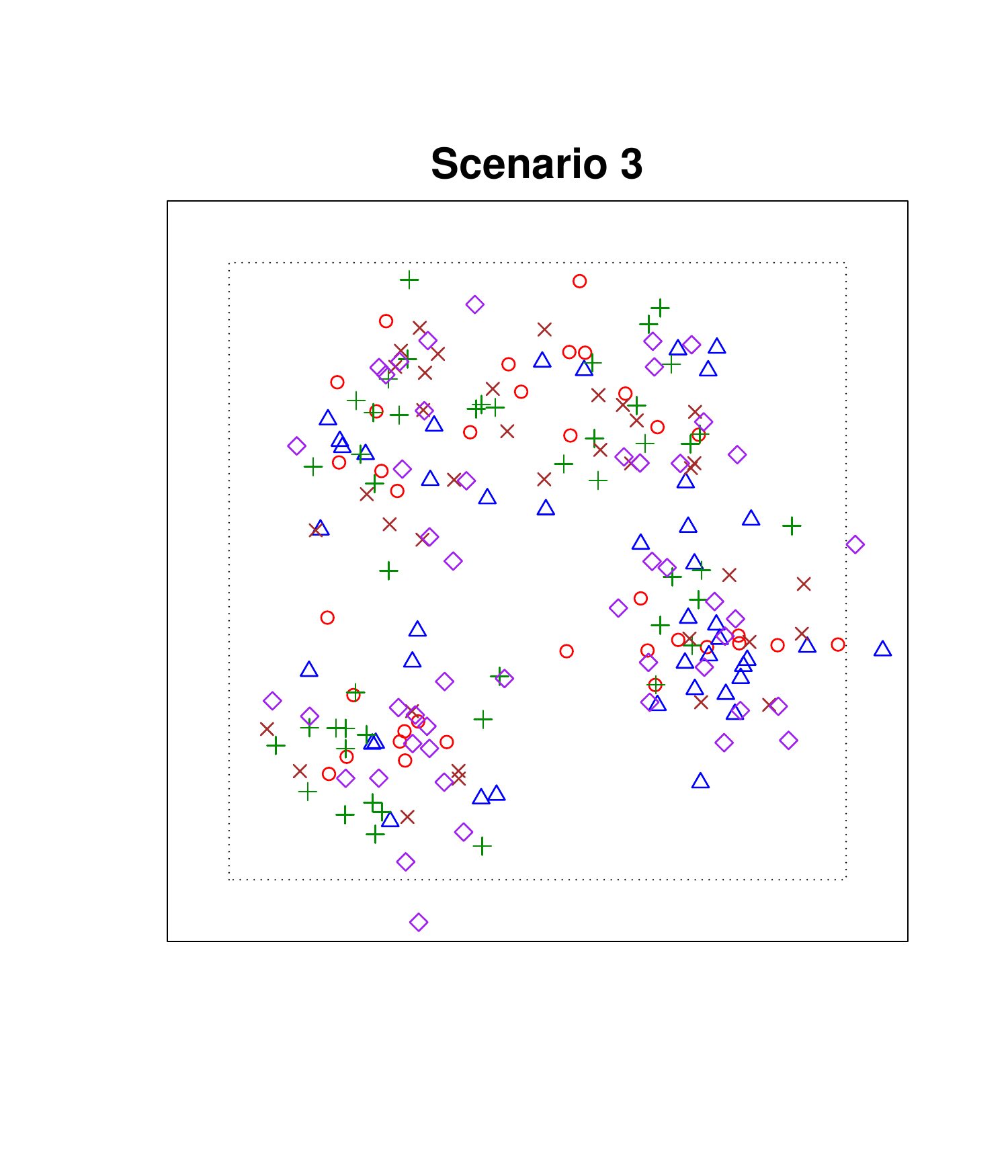}\vspace*{-4em}
    \hspace*{-1em}
    \includegraphics[width=0.4\textwidth]{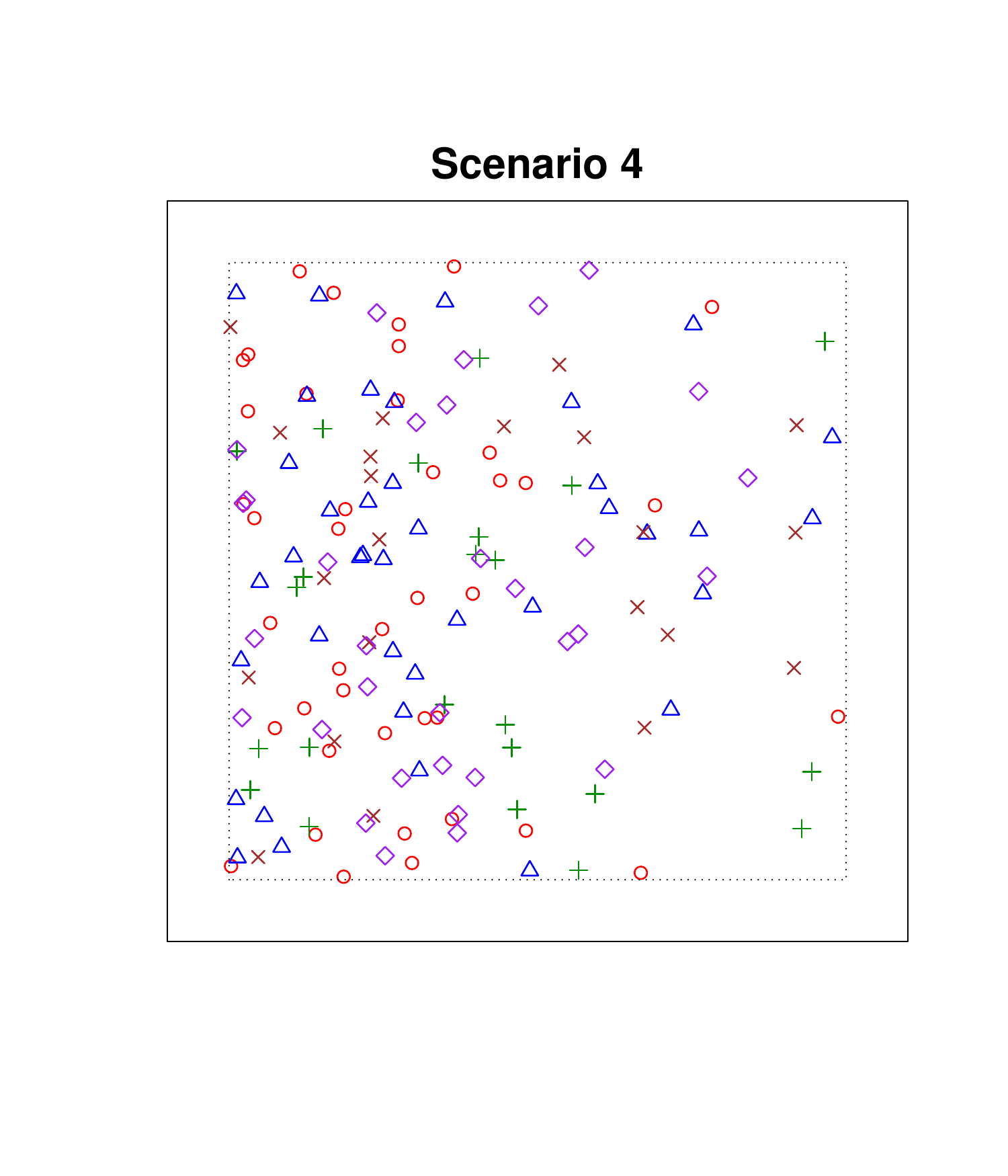}\hspace*{-4em}
    \includegraphics[width=0.4\textwidth]{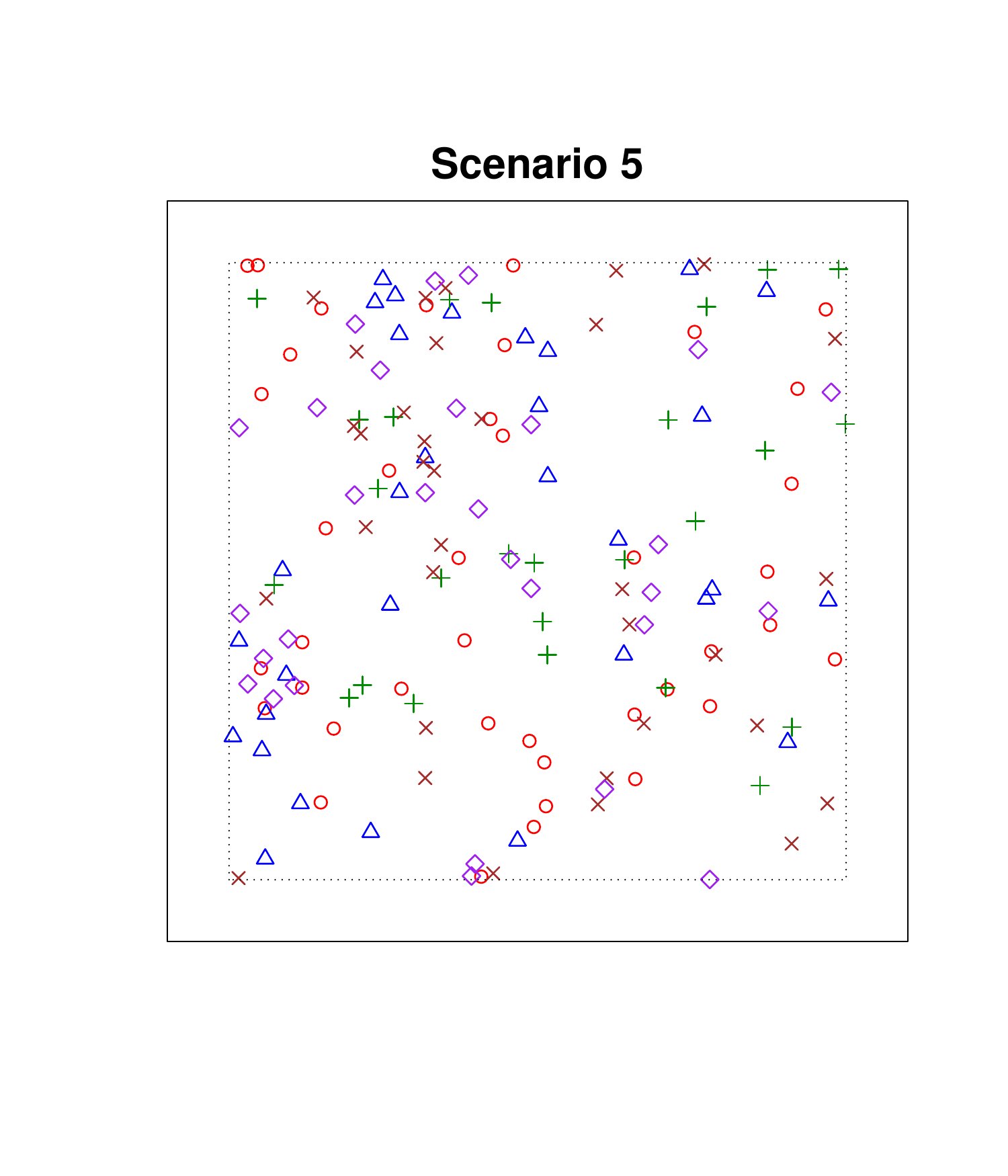}\hspace*{-4em}
    \includegraphics[width=0.4\textwidth]{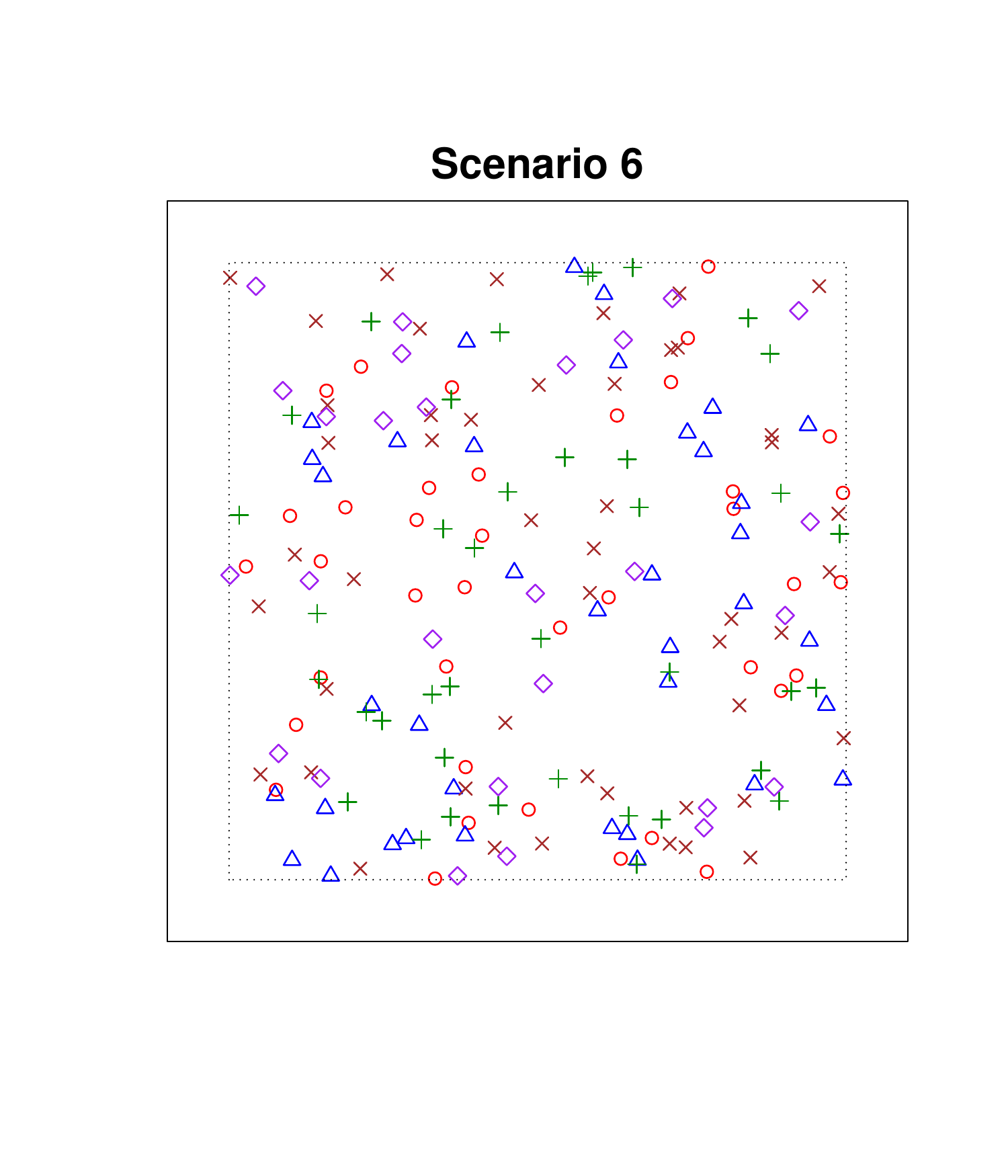}\vspace*{-4em}
    \caption{\label{fig:InhomoScenarios} Five example patterns for each of the six scenarios together with the window $[0,1]^2$ (dotted line) in which
    the homogeneous Poissons processes of the second group are sampled.}
\end{figure}

Table~\ref{tab:InhomoResults} gives the results in terms of numbers of rejections (out of 100) of the null hypothesis of equal distribution in both groups.
\begin{table}[htb]
    \begin{center}
        \begin{tabular}{|c||C{1.5em}|C{1.5em}|C{1.5em}|C{1.5em}|C{1.5em}|C{1.5em}|C{1.5em}|}
            \hline
            Scenario & $1$ & $2$ & $3$ & $4$ & $5$ & $6$ & $0$ \\
            \hline
            \hline
            Anderson $F_A$ & 100 & 100 & 100 & 100 & 99 & 39 & 2\\
            \hline
            new $L$ & 93 & 76 & 77  & 14 & 7 & 9 & 3 \\
            \hline
            Fr\'echet $T_F$ & 100 & 100 & 100 & 99 & 11 & 0 & 4 \\
            \hline
            Fr\'echet $T_L$ & 59 & 24 & 47 & 13 & 9 & 12 & 4 \\
            \hline
        \end{tabular}
    \end{center}
    \vspace*{-4mm}
    \caption{\label{tab:InhomoResults} Numbers of rejections of the null hypothesis ``equal distribution in both groups'' based on 100 data sets per column. In each data set the first group is sampled from the scenario indicated in the column and the second group is sampled from Scenario 0.}
\end{table}

We observe that the direct ANOVA procedures perform much better than the Levene (or indirect ANOVA) procedures. This is not so surprising, because the inhomogeneity experiment considers two groups of distributions that are different in terms of their location in the point pattern space. To see this intuitively, think about the distributions in Scenarios 1--6 (and 0 as a boundary case) in terms of producing locally perturbed versions of a typical point pattern, which is more or less any one of the example point patterns in Figure~\ref{fig:InhomoScenarios} (more appropriately one would rather think of an idealized version of these patterns, such as the Fr\'echet mean). Among the direct ANOVA methods, Anderson $F_A$ performs substantially better than Fr\'echet $T_F$ and has still a reasonable chance to detect the faint differences between Scenarios 6 and 0 when presented with the 20 patterns from each group. Our new L-test performs somewhat better than the Fr\'echet L-test, but both tests are only able to detect the inhomogeneity (with reasonable probability) when it is very obvious (Scenarios 1--3).

\subsection{Interaction between Points}
\label{subsec:simulstud_interact}

Again we compare $k=2$ groups of $\tn = 20$ point patterns. This time the group distributions differ in the degree of point interaction. For this we consider the distribution of the homogeneous Strauss process on the unit square $\mcr = [0,1]^2$, which is obtained by specifying the density $f \colon \mfnfin \to \RR_+$,
\begin{equation*}
  f(\xi) := c \cdot \beta^{\abs{\xi}} \cdot \gamma^{s_R(\xi)},
\end{equation*}
with respect to CSR with intensity 1 on $\mcr$, where
\begin{equation*}
  s_R(\xi) = \sum_{\{x,y\} \subset \xi} \mathds{1}\{\norm{x-y} \leq R\}
\end{equation*}
is the number of pairs of points at distance $\leq R$ from one another. Here $R>0$ is the range of the interaction, $\beta > 0$ is the so-called activity (which controls the intensity of the process via an increasing function, that is however only accessible numerically) and $\gamma \in [0,1]$ is the strength of the interaction. The constant $c$ normalizes the density to an overall integral of $1$ and is also not available in closed form. We write $\strauss(\beta, \gamma; R)$ for this point process distribution. Intuitively a $\strauss(\beta, \gamma; R)$-process is obtained from a $\csr(\beta)$ process by penalizing each outcome according to a factor $\gamma$ per $R$-close point pair. Correspondingly we have $\strauss(\beta, 1; R) = \csr(\beta)$ (regardless of $R$). At the other end of the spectrum $\strauss(\beta, 0; R)$ is the distribution of a hard core process with no points allowed within distance $R$ of other points.

For the simulation we set $R=0.1$ and consider scenarios based on the six different values $\gamma = 0,0.2,0.4,0.6,0.8,1$. The activity $\beta$ is adapted so that each time $\lambda=35$. Figure~\ref{fig:StraussExamples} shows one realization for each of the six scenarios.
\begin{figure}[!htb]
    \hspace*{2mm}
    \vspace*{-8mm}

    \hspace*{-1em}\includegraphics[width=0.42\textwidth]{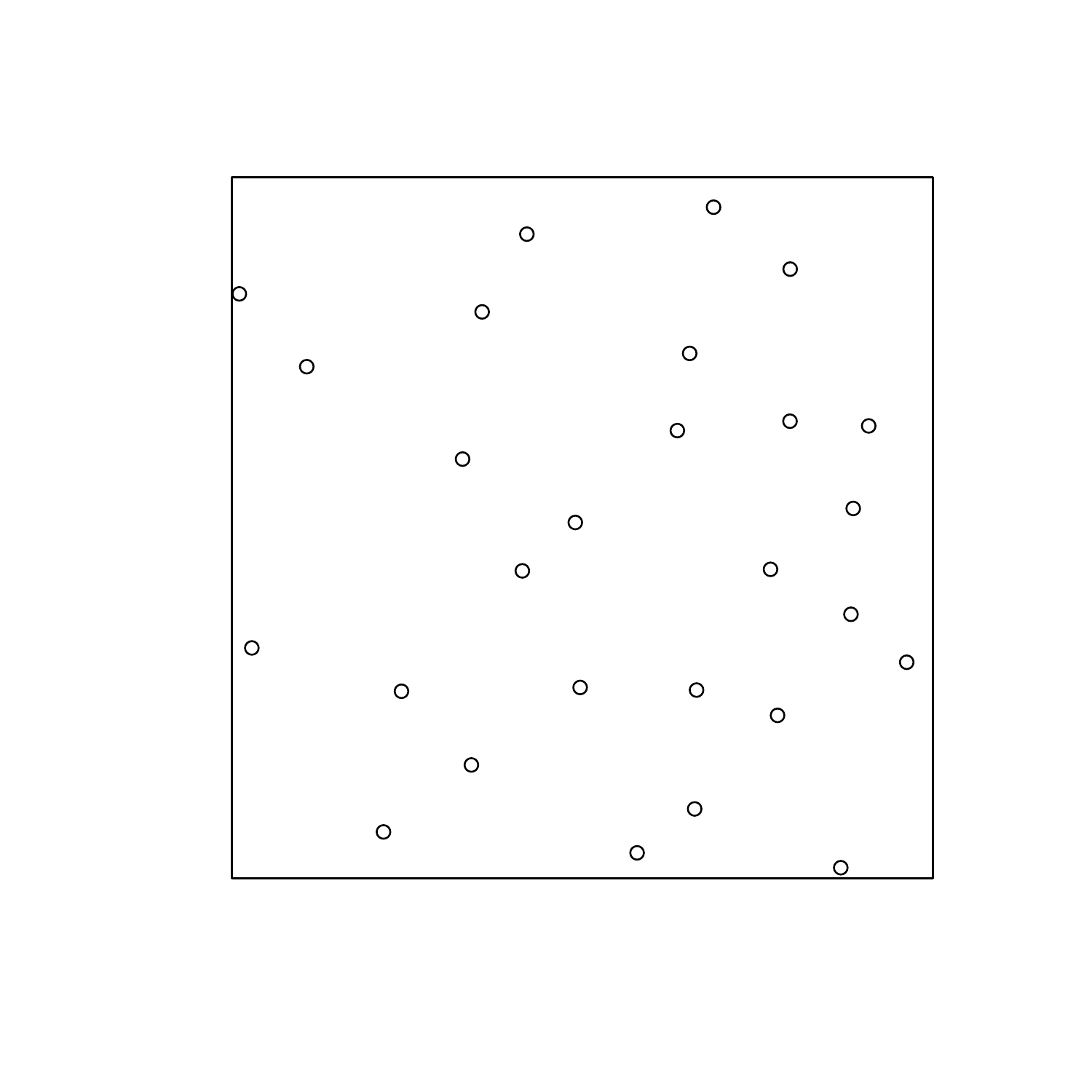}\hspace*{-5.5em}
    \includegraphics[width=0.42\textwidth]{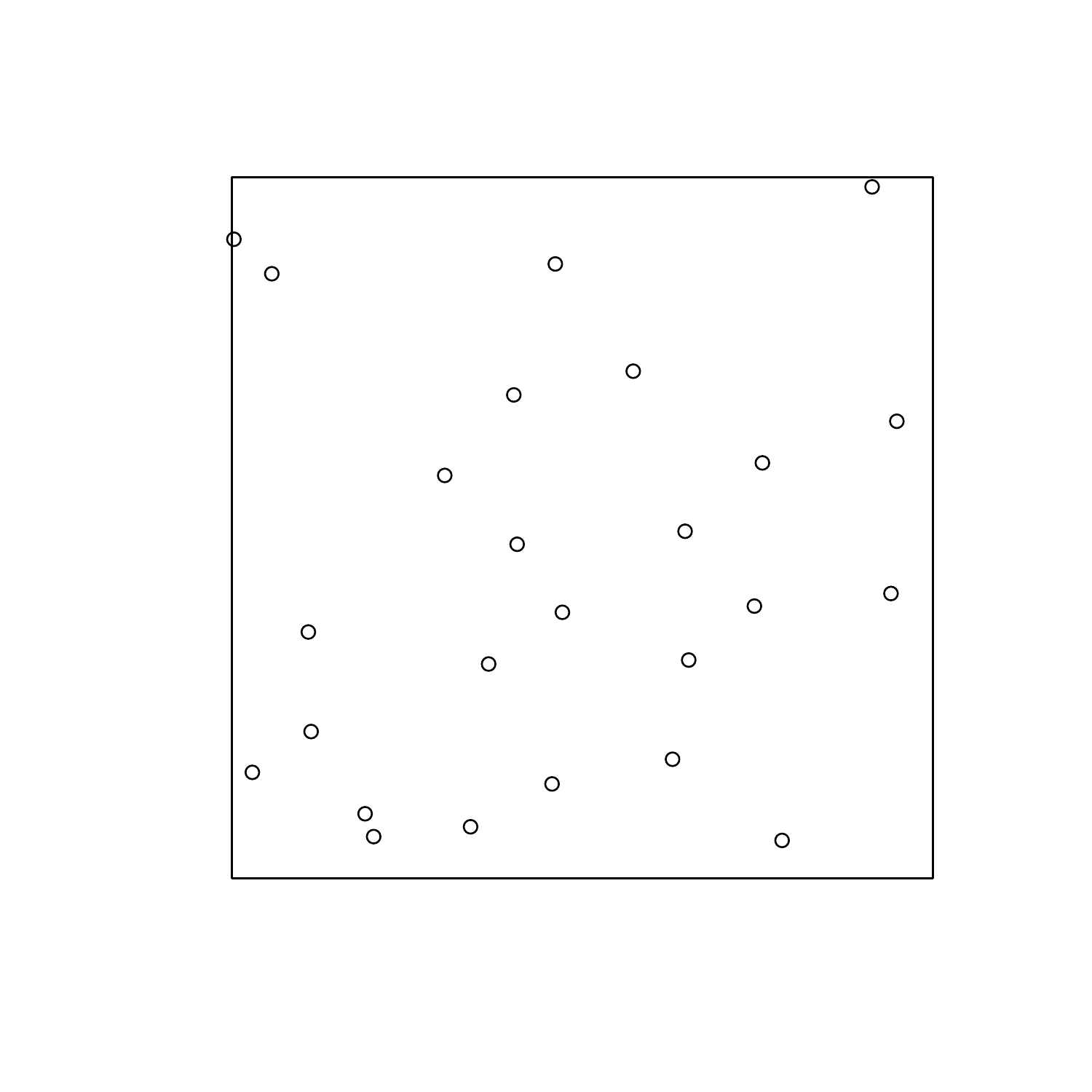}\hspace*{-5.5em}
    \includegraphics[width=0.42\textwidth]{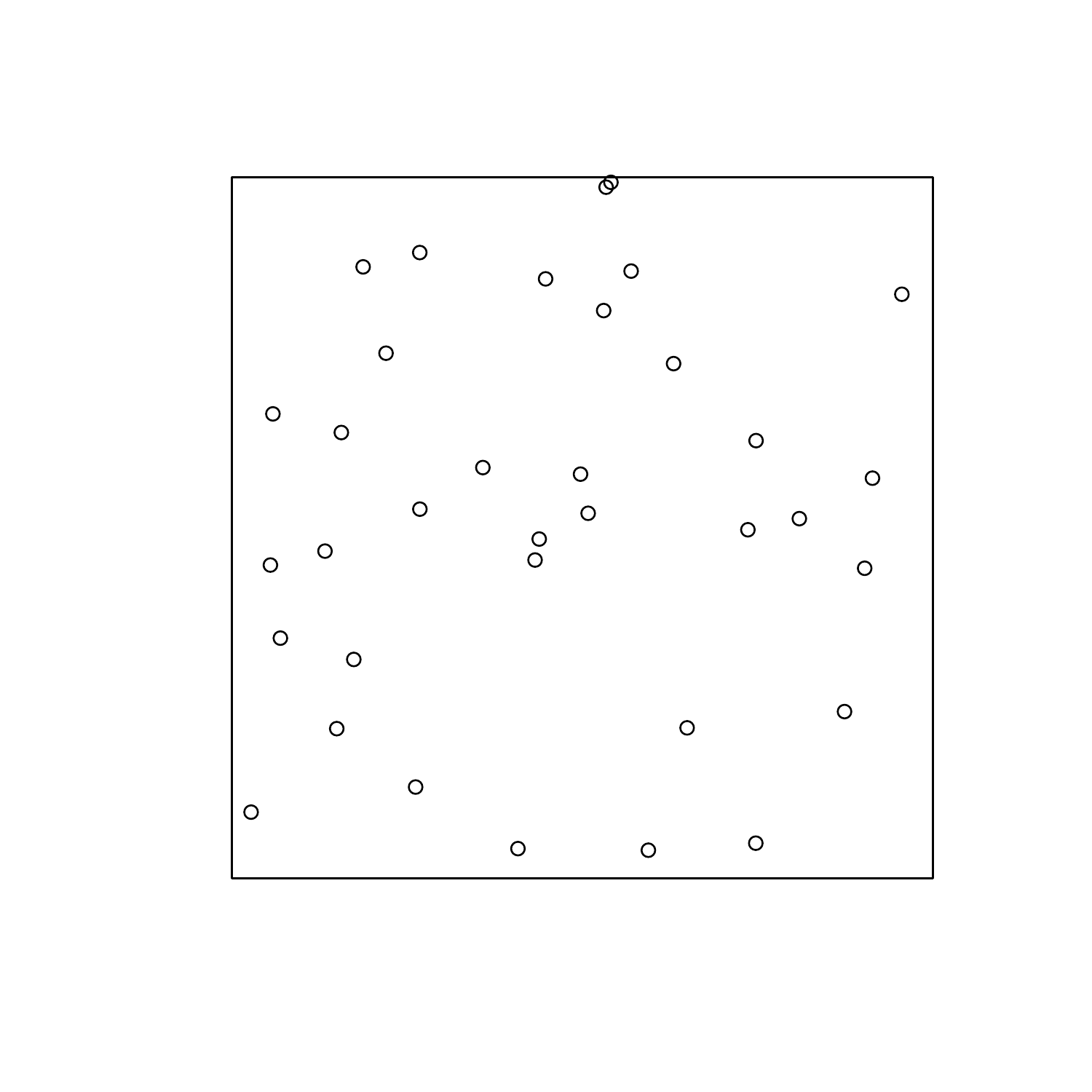}\hspace*{-5.5em}
    \vspace*{-20.5mm}

    \hspace*{-1em}\includegraphics[width=0.42\textwidth]{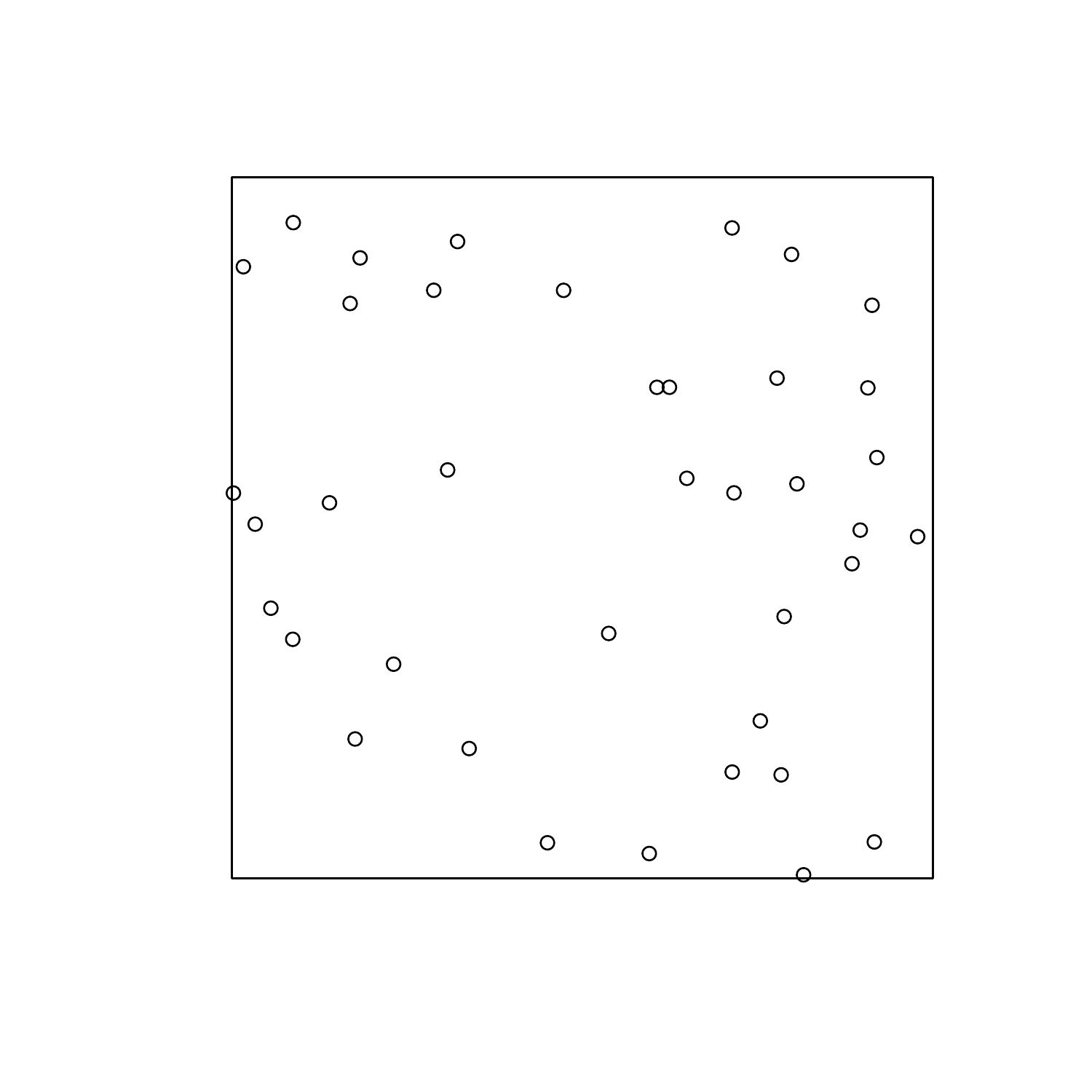}\hspace*{-5.5em}
    \includegraphics[width=0.42\textwidth]{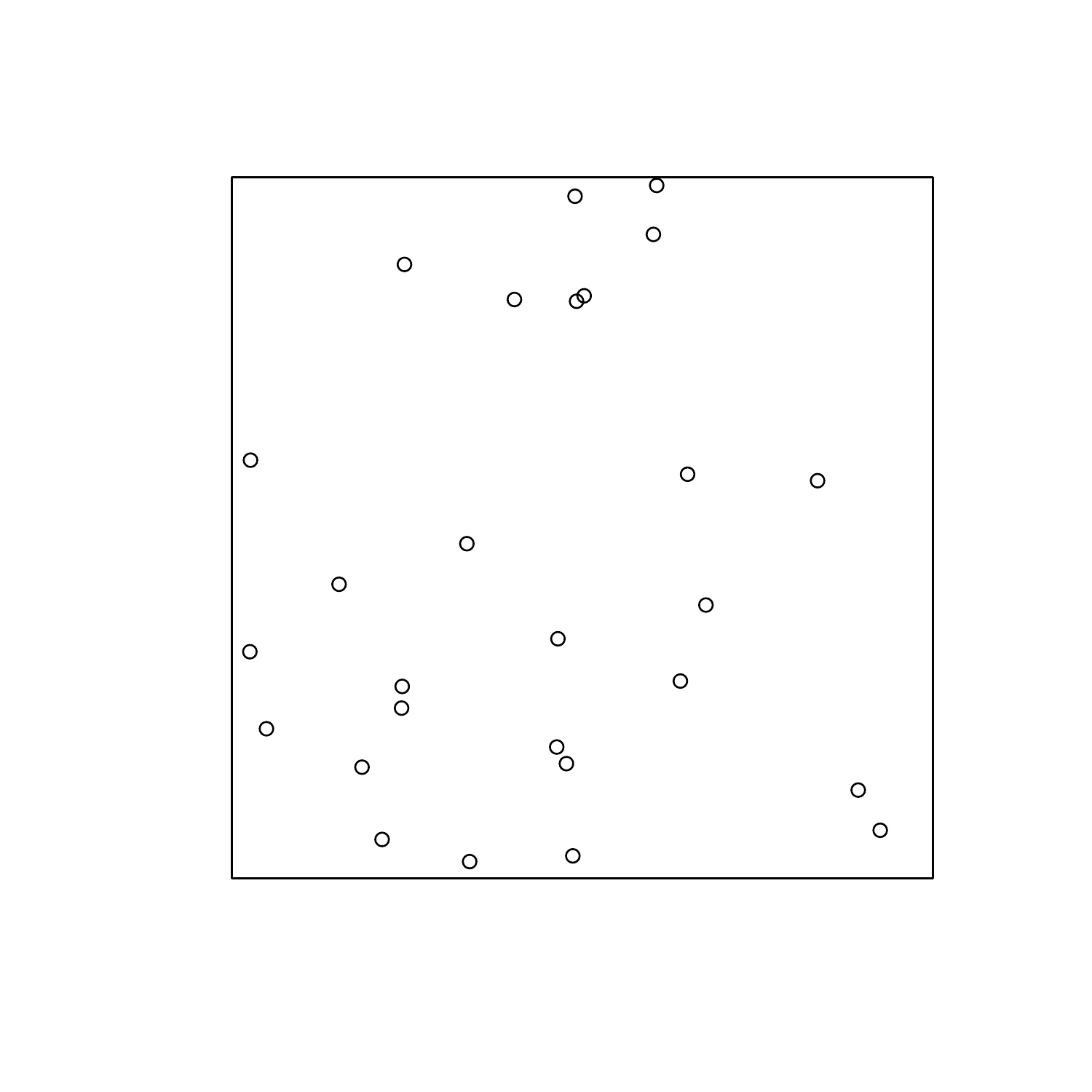}\hspace*{-5.5em}
    \includegraphics[width=0.42\textwidth]{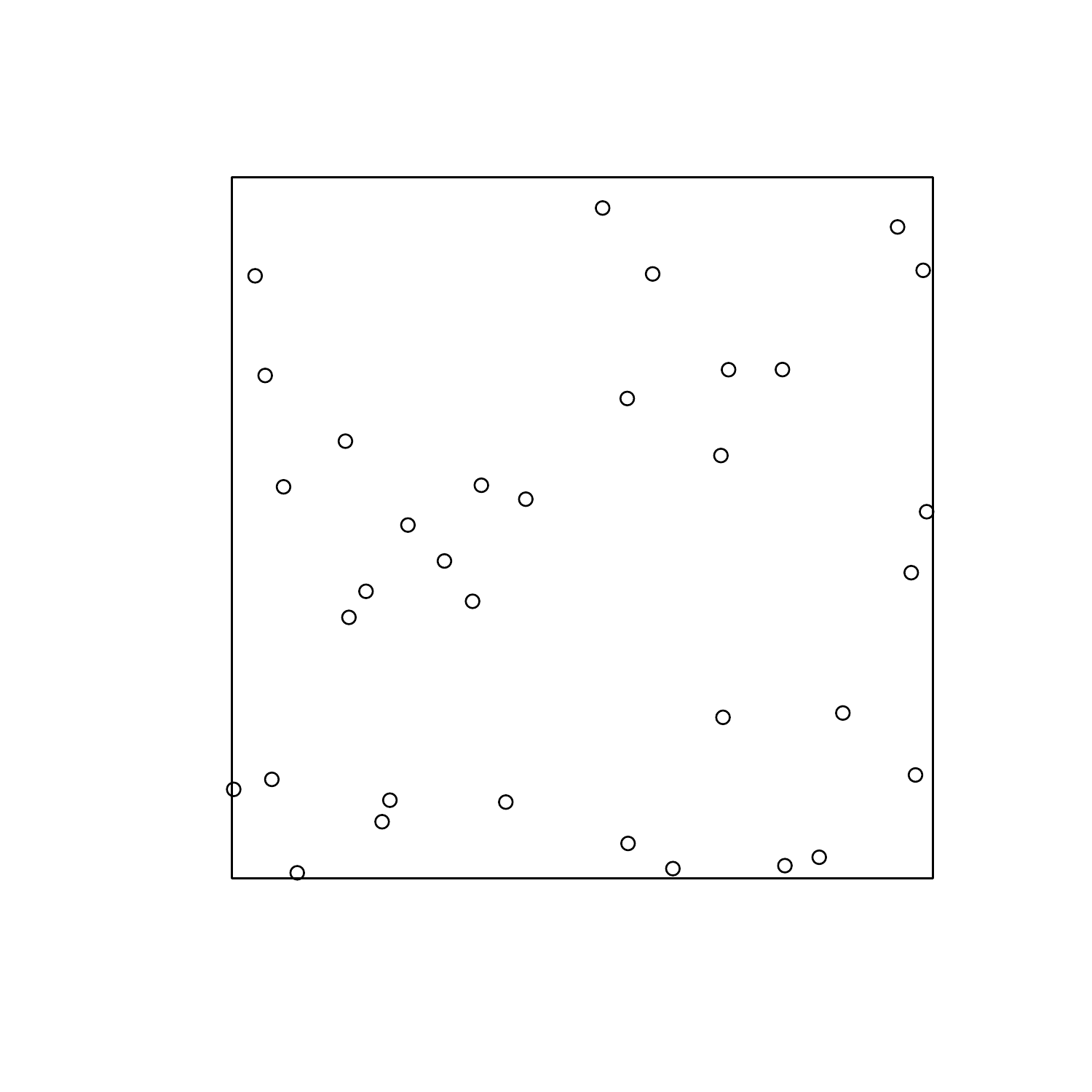}\hspace*{-5em}
    \vspace*{-14mm}

    \caption{Simulations from $\strauss(\beta, \gamma; 0.1)$-distributions, where rowwise from left to right $\gamma = 0,0.2,0.4,0.6,0.8,1$ and $\beta$ is adjusted such that $\lambda=35$. For $\gamma=0$ we have a realization of a hard core process, for $\gamma=1$ a realization from CSR.}
    \label{fig:StraussExamples}
\end{figure}

We perform two different experiments here. In the first one the patterns in Group 1 are sampled from $\csr(35)$ corresponding to $\gamma=1$, in the second one they are sampled from the mentioned hard core process with $\lambda=35$ corresponding to $\gamma=0$. The patterns in Group~2 are sampled in both experiment from each of the six $\gamma$-values in turn. The results are listed in Table~\ref{tab:InteractionResults}.
\begin{table}[!htb]
    \begin{center}
        \begin{tabular}{|c||C{2.5em}|C{2.5em}|C{2.5em}|C{2.5em}|C{2.5em}|C{2.5em}|}
            \hline
            \hspace*{4mm} \scalebox{0.9}[0.95]{$\gamma=1$\hspace*{5mm}vs.\hspace*{-4mm}} &\scalebox{0.9}[0.95]{$\gamma=0$} &\!\scalebox{0.9}[0.95]{$\gamma=0.2$} &\!\scalebox{0.9}[0.95]{$\gamma=0.4$} &\!\scalebox{0.9}[0.95]{$\gamma=0.6$} &\!\scalebox{0.9}[0.95]{$\gamma=0.8$} &\scalebox{0.9}[0.95]{$\gamma=1$} \\
            \hline
            \hline
            Anderson $F_A$ & $98$ & $41$ & $13$ & $8$ & $4$ & $5$  \\
            \hline
            new $L$ & $100$ & $100$ & $95$ & $67$ & $20$ & $3$  \\
            \hline
            Fr\'echet $T_F$ & $100$ & $98$ & $78$ & $45$ & $20$ & $4$  \\
            \hline
            Fr\'echet $T_L$ & $100$ & $99$ & $88$ & $45$ & $20$ & $4$  \\
            \hline
        \end{tabular}
    \end{center}
     \begin{center}
        \begin{tabular}{|c||C{2.5em}|C{2.5em}|C{2.5em}|C{2.5em}|C{2.5em}|C{2.5em}|}
            \hline
            \hspace*{4mm} \scalebox{0.9}[0.95]{$\gamma=0$\hspace*{5mm}vs.\hspace*{-4mm}} &\scalebox{0.9}[0.95]{$\gamma=0$} &\!\scalebox{0.9}[0.95]{$\gamma=0.2$} &\!\scalebox{0.9}[0.95]{$\gamma=0.4$} &\!\scalebox{0.9}[0.95]{$\gamma=0.6$} &\!\scalebox{0.9}[0.95]{$\gamma=0.8$} &\scalebox{0.9}[0.95]{$\gamma=1$} \\
            \hline
            Anderson $F_A$ & $3$ & $55$ & $90$ & $98$ & $97$ & $99$  \\
            \hline
            new $L$ & $11$ & $60$ & $96$ & $100$ & $100$ & $100$  \\
            \hline
            Fr\'echet $T_F$ & $6$ & $57$ & $91$ & $97$ & $100$ & $100$  \\
            \hline
            Fr\'echet $T_L$ & $9$ & $33$ & $82$ & $95$ & $99$ & $100$  \\
            \hline
        \end{tabular}
    \end{center}
    \vspace*{-4mm}
    \caption{Numbers of rejections of the null hypothesis ``equal distribution in both groups'' based on 100 data sets per column. In each data the point patterns in both groups are sampled from a Strauss distribution with $\lambda=35$ and $R=0.1$. The first group is sampled using $\gamma=1$ or $\gamma=0$ as indicated on the top left of the table and the second group uses $\gamma$ as indicated in the column.}
    \label{tab:InteractionResults}
\end{table}

In contrast to the situation in the previous subsection (different inhomogeneity), we now observe that the \emph{indirect} ANOVA procedures, i.e.\ the Levene-type tests perform considerably better than the direct ANOVA procedures. Again this is intuitively understandable because a small $\gamma$ in the Strauss process leads to less dispersion, both in terms of a smaller variance for the total number of points and also with respect to typical distances of points from one another: for small $\gamma$ the points are quite regularly placed, whereas for larger $\gamma$ there are erratic patches that are free of points leading typically to some points that have to be matched over longer distances, which in the squared Euclidean metric has quite some influence.
A small $\gamma$ will also lead to smaller average distances than a larger $\gamma$ (either between point patterns or relative to a barycenter), which may explain why the difference in the performance of the indirect and direct ANOVA tests is somewhat less pronounced than in the inhomogeneity experiment.
\\
Note again that the powers of the tests based on pairwise distances are slightly better than those of the tests based on barycenters.

\section{Applications} \label{sec:realdata}

In this section we apply our Levene's test to a real data example. We investigate the location of bubbles in a mineral flotation experiment.
The structure of the data calls for a two factor design. 
We establish a distance based two-way Levene's test and compare its performance to existing methods. 
The classical two-way ANOVA design can be found in Section~\ref{sec:classic}.

\subsection{Balanced Two-Way Levene's Test}\label{subsec:balancedTwoWayLevene}

As mentioned in Subsection~\ref{subsec:distanceLeveneBasics} it is easy to generalize statistic (\ref{stat:distleveneBalanced}) to a two-way design, that will further be useful for the bubble data analyzed in the next Subsection. 

Suppose we have independent observations $x_{i_1 i_2 j} \in \mcx$, $1\leq j \leq \tn$, $1 \leq i_1 \leq k_1$, $1 \leq i_2 \leq k_2$ from groups obtained by crossing a Factor $a$ with $k_1$ levels and a Factor $b$ with $k_2$ levels with $\tn$ observations for each combination. In a similar way as above we denote by $d_{i_1 i_2 j}$ the $j$-th half-distance in the group $(i_1,i_2)$, where $j = 1, \ldots, \tN:= \binom{\tn}{2}$. Set then
\begin{align*}
    \rss &= \sum_{i_1=1}^{k_1}\sum_{i_2=1}^{k_2}\sum_{j=1}^{\tN} (d_{i_1i_2j}-\bar{d}_{i_1i_2\cdot})^2
    \\
    \mss &= \sum_{i_1=1}^{k_1} \sum_{i_2=1}^{k_2} \tn (\bar{d}_{i_1i_2\cdot}-\bar{d}_{\cdot\cdot\cdot})^2
    \\
    \ssa &= \sum_{i_1=1}^{k_1} k_2 \tn (\bar{d}_{i_1\cdot\cdot}-\bar{d}_{\cdot\cdot\cdot})^2
    \\
    \ssb &= \sum_{i_2=1}^{k_2} k_1 \tn (\bar{d}_{\cdot i_2\cdot}-\bar{d}_{\cdot\cdot\cdot})^2
    \\
    \ssi &= \sum_{i_1=1}^{k_1}\sum_{i_2=1}^{k_2} \tn (\bar{d}_{i_1i_2\cdot} - \bar{d}_{i_1\cdot\cdot} - \bar{d}_{\cdot i_2\cdot} + \bar{d}_{\cdot\cdot\cdot})^2,
\end{align*}
where the various means are taken over the dot components in the usual way. Note that we never use any distances between observations of different factor combinations.

In addition to the omnibus test for group differences as in one-way ANOVA, we may then perform Levene-type tests for effects of Factor a and b separately, as well as for an interaction effect. 
The corresponding statistics are
\begin{align*}
    L = \frac{N - k_1 k_2}{(k_1 k_2 - 1)} \frac{\mss}{\rss}, \hspace{2mm}
    La = \frac{N - k_1 k_2}{k_1 - 1} \frac{\ssa}{\rss}, \hspace{2mm}
    Lb = \frac{N - k_1 k_2}{k_2 - 1} \frac{\ssb}{\rss}, \hspace{2mm}
    Li = \frac{N - k_1 k_2}{(k_1-1)(k_2-1)} \frac{\ssi}{\rss}.
\end{align*}

\subsection{Bubble Data}

\begin{figure}[!htb]
    \includegraphics[width=1\textwidth]{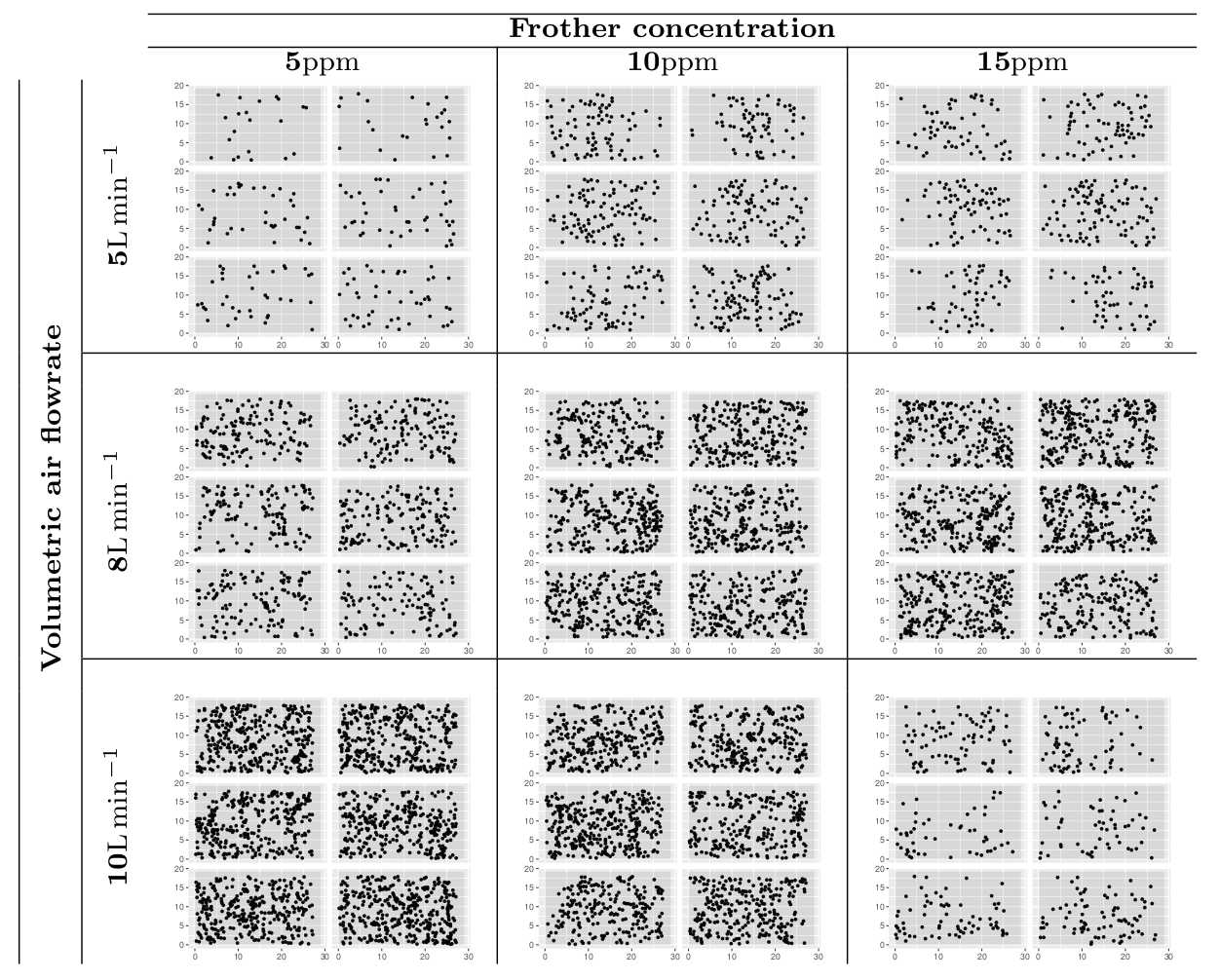}
    \caption{Arrangement of floating bubbles data. Rows represent the three frother concentration levels and columns the three volumetric air flowrate levels (treatments). Each cell contains six spatial point patterns (responses).}
    \label{fig:BubbleData}
\end{figure}

We consider the data from \cite{gonzalez2021two} which provides locations of bubbles in a mineral flotation experiment, where the interest
is analysing if the spatial distribution might be affected by frother concentrations and volumetric airflow rates. Indeed, the data set consists of 54 images containing a total of 8385 floating bubbles. The images of bubbles can be regarded as spatial point patterns where the
centroids of the bubbles correspond to the points. In addition, we have three frother concentration levels (5 ppm, 10 ppm, 15 ppm) 
as well as three volumetric airflow rate levels (5 l/min, 8 l/min, 10 l/min), and we have six replicates of point patterns at each combination of levels of such factors. The treatment combinations of the experiment, as well as the observed bubble point patterns, are represented in Figure~\ref{fig:BubbleData}.

We used the two-way design of Levene's statistic from Section \ref{subsec:balancedTwoWayLevene} to test for influence of the different factors, interaction and differences between the groups. For comparison we also used the two factor statistics from \cite{anderson2001new}, we performed a two factor ANOVA on the number of points per pattern, and finally complemented our analysis with a two factor ANOVA with $K$-functions, so as to link our analysis with that of \cite{gonzalez2021two}. We did a permutation test with $999$ permutations.

In Section \ref{sec:simulstud}, the cutoff was always fixed to $C=0.25$. This was a reasonable value for point patterns with expected $35$ points in the unit square. 
In the bubble data, the number of points per observed pattern ranges from $21$ to $353$. With such a great variability in the number of points we suggest adjusting the cutoff to prevent that distances between two patterns are dominated by their different numbers of points. For the results presented in this section we computed the mean number of points of the tested patterns $\bar{n}$ and used the cutoff $\bar{C} = 0.25 \cdot 35/\bar{n}$ for the computations of $d_{TT}$. For more details to the cutoff see \eqref{eq:ttdef}.

\begin{table}[!htb]
  \begin{center}
      \begin{tabular}{|l||c|c|c|c|}
          \hline
          $p$-values & FC & VA & Interaction & Overall \\
          \hline
          Anderson $F_A$ & 0.003 & 0.001 & 0.001 & 0.001 \\
          \hline
          new $L$ & 0.001 & 0.001 & 0.001 & 0.001 \\
          \hline
          Number of points & 0.001 & 0.001 & 0.001 & 0.001 \\
          \hline
          $K$-functions & 0.005 & 0.001 & 0.001 & 0.001$^{**}$ \\
          \hline
      \end{tabular}
  \end{center}
  $^{**}$ this is the p-value for the sum of both factors, not the overall ANOVA statistic.
  \caption{Results for the different tests for the bubble data. Quantiles are obtained by a per\-mu\-ta\-tion test with $999$ permutations. The cutoff is $C=0.0564$, the maximal radius for the $K$-functions is $r=0.15$.}
  \label{tab:BubbleData}
\end{table}

\begin{table}[!htb]
  \begin{center}
    \begin{tabular}{|l||c|c|c|c|}
        \hline
        $p$-values & FC & VA & Interaction & Overall \\
        \hline
        Anderson $F_A$ & 0.043 & 0.001 & 0.019 & 0.001 \\
        \hline
        new $L$ & 0.001 & 0.001 & 0.001 & 0.001 \\
        \hline
        Number of points & 0.001 & 0.002 & 0.001 & 0.001 \\
        \hline
        $K$-functions & 0.002 & 0.022 & 0.002 & 0.006$^{**}$ \\
        \hline
    \end{tabular}
  \end{center}
  $^{**}$ this is the p-value for the sum of both factors, not the overall ANOVA statistic.
  \caption{Results for the different tests for the bubble data, leaving out the frother concentration of $15$ppm. Quantiles are obtained by a permutation test with $999$ permutations. The cutoff is $C=0.0636$, the maximal radius for the $K$-functions is $r=0.15$.}
  \label{tab:BubbleData_part}
\end{table}

The p-values of the permutation tests are shown in Tables~\ref{tab:BubbleData} and \ref{tab:BubbleData_part}. In particular, Table~\ref{tab:BubbleData} shows results for the whole data set, while Table~\ref{tab:BubbleData_part} depicts results for only part of the data, leaving out the third column, i.e. any patterns from frother concentration of $15$ ppm. In both cases, our new Levene, Anderson $F_A$, the ANOVA on the number of points per pattern, and the ANOVA for $K$-functions detect significant influence of each of the two factors and the interaction.
We already recommended to always perform both, the tests for differences in variability and the test for differences of means. 
In the second test scenario, both Levene's test and Anderson $F_A$ detect significance for the frother concentration and the interaction of both parameters for our usual significance level of $5\%$. 
But the relative difference between the $p$-values of the two tests is very large. 
For the smaller significance level of $1\%$, our Levene's test still detects significance where Anderson $F_A$ does not.
So the test for differences of means might not be enough in a practical application. This is particularly important in cases where, as it is the case for the bubble data, the number of points plays a crucial role in the behavior and structure of the point patterns.

We see that for this data apparently the numbers of points per pattern contain enough information to detect significant influence of the factors. 
This is not very surprising since the number of points per pattern is similar in the $6$ patterns of a single cell, but the differences between cells are large.
\\
This observation is reinforced by a classical multidimensional scaling (mds). 
Based on the TT-distances between the point patterns, we translated every point pattern into a single point in $\mathbb{R}^2$. 
The mds was applied first for the whole bubble data set, see Figure~\ref{fig:cmdwhole}, and then for a subset of the data consisting of the first and second columns, leaving out the data with a frother concentration of $15$ ppm, see Figure~\ref{fig:cmdnofc15}. 
This is the same data that we used for our analyses in Tables~\ref{tab:BubbleData} and~\ref{tab:BubbleData_part}.
The three levels of the air flow are encoded by the colors `red', `green' and `blue', same color means same air flow rate, and the three levels of the frother concentration are encoded by the symbols `circle', `triangle' and `cross'. 
When we compare these plots to the images of the point patterns in Figure~\ref{fig:BubbleData} we can see that the multidimensional scaling sorts the point patterns from left to right in ascending order by their number of points per pattern. 
In Figure~\ref{fig:cmdnofc15} we can see that the points that correspond to the data with a frother concentration of $5$ ppm, i.e. the circles, and the data with a frother concentration of $10$ ppm, i.e. the triangles, are scattered differently.
The (coordinate-wise) means of the triangles and circles are similar, but we can see that the circles are more scattered along both axes.
We conjecture that it is this difference in scatter that our Levene's test is able to detect in Table~\ref{tab:BubbleData_part}, whereas the Anderson $F_A$ only barely detects a slight difference in means.

\begin{figure}[!p]
  \includegraphics[width=0.95\textwidth]{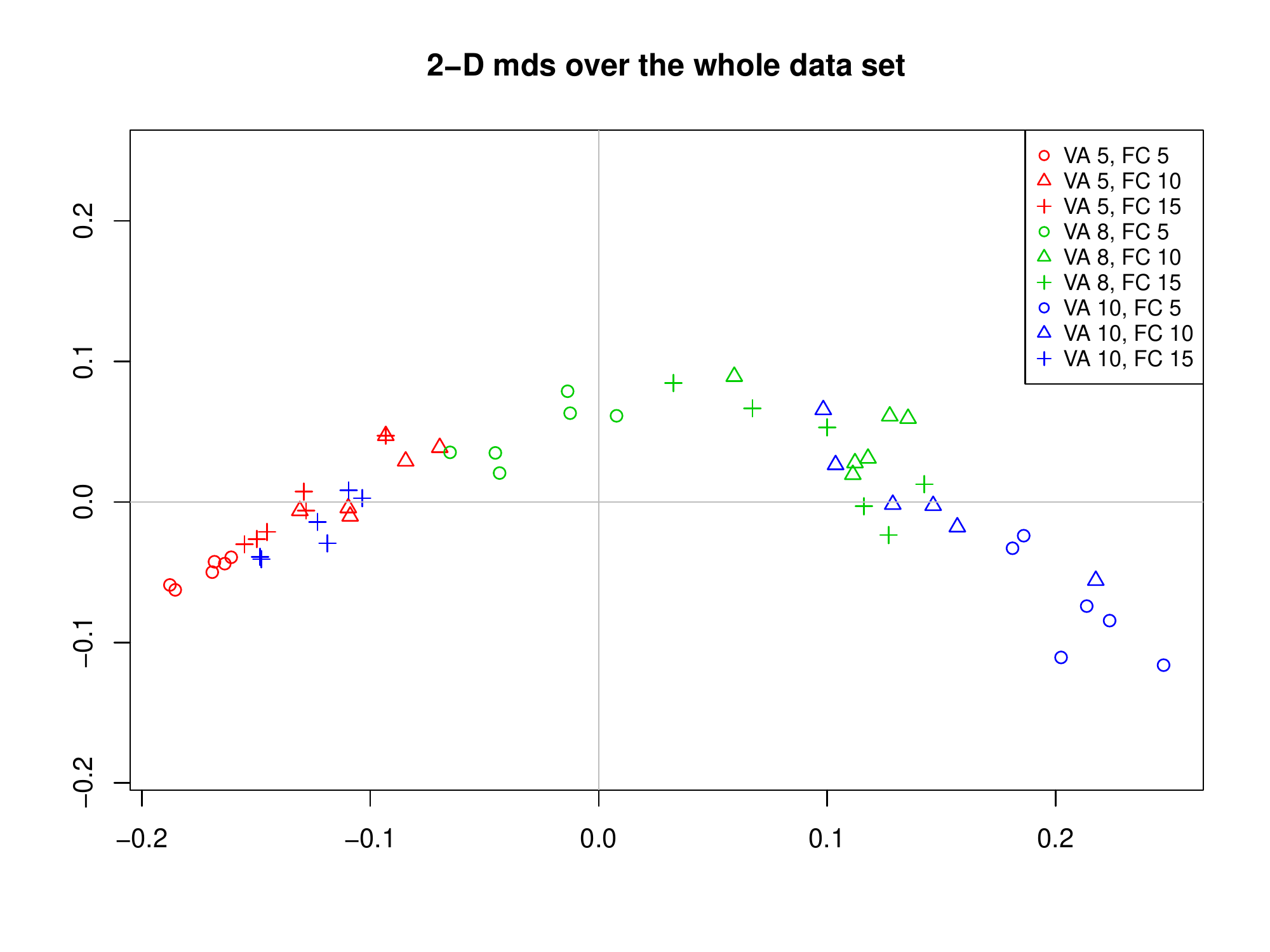}\vspace*{-3em}
  \caption{The bubble data after a multidimensional scaling into two dimensions based on the distance matrix w.r.t the TT-metric.}
  \label{fig:cmdwhole}
\end{figure}
\begin{figure}[!p]
  \includegraphics[width=0.95\textwidth]{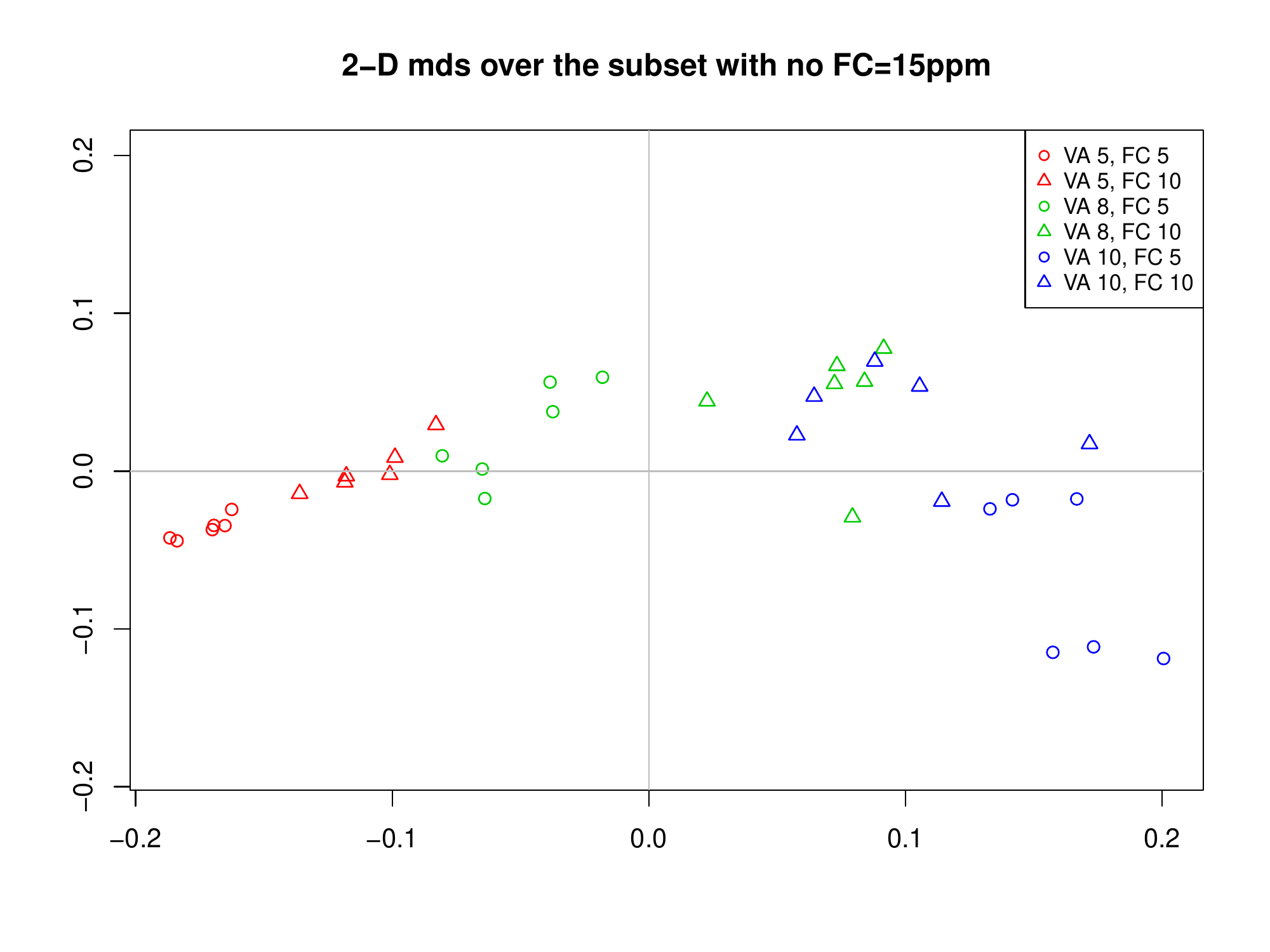}\vspace*{-3em}
  \caption{The bubble data without the third column, i.e. without data where FC=$15$ppm, after a multidimensional scaling into two dimensions based on the distance matrix w.r.t the TT-metric.}
  \label{fig:cmdnofc15}
\end{figure}

\section{Conclusions and Discussion} \label{sec:conclusion}

In this paper we gave an overview of some ANOVA procedures that can be used for data in general metric spaces. 
We introduced a new method that is similar to Levene's test and compared it to existing methods with regard to point pattern data in Section~\ref{sec:simulstud}.
In the studies, see Tables \ref{tab:InhomoResults} and \ref{tab:InteractionResults} for the results, we compared the distance-based ANOVA from \cite{anderson2001new}, the distance-based Levene's test, \eqref{stat:distleveneBalanced}, introduced in this paper and the tests based on the ANOVA statistic $T_F$ and the Levene statistic $T_L$ from \cite{dubey2019frechet}.
\\ 
The latter proposed in their paper the combined statistic $T = T_L + T_F$. 
In our simulations we wanted to put a focus on the two fundamentally different ways of ``location'' and ``dispersion'' in which group distributions can differ, even in an abstract metric space.
We therefore did our tests with the statistics $T_L$ and $T_F$ seperately, which allows us to have a direct comparison between the distance-based statistics and the statistics of \cite{dubey2019frechet}.
We also did the tests with the proposed combined statistic $T$, and for completeness we give the results in Tables~\ref{tab:SumstatResults_Inhomogeneity} and \ref{tab:SumstatResults_Interaction}.
In the tests for differences in interaction, comparing Tables~\ref{tab:SumstatResults_Interaction} and \ref{tab:InteractionResults}, we can see that the performance of the statistic $T$ is ``between'' the performance of $T_L$ and $T_F$.
For the tests of inhomogeneity, see Tables~\ref{tab:SumstatResults_Inhomogeneity} and \ref{tab:InhomoResults}, the combined statistic $T$ performs almost as good as the better statistic of $T_L$ and $T_F$, which is in this case $T_F$.
\\
For the presented scenarios and the chosen parameters all the statistics worked well for their designated purpose, in particular the ANOVA statistics for detecting inhomogeneity and the Levene's statistics for detecting differences in point interaction. 
But for different scenarios this might not be the case. For a cutoff of $C=0.1$ instead of the proposed $C=0.25$, we observed that the statistic $T$ and the Anderson $F_A$ do not work as well anymore in detecting differences in interaction, see Tables~\ref{tab:PowerPoissons} and \ref{tab:PowerStrauss}. 
The performance of $T$ is considerably worse than the performance of $T_L$ and $T_F$ is working even more poorly as well. 
The distance based ANOVA statistic of Anderson also performs very poorly compared to the cutoff of $C=0.25$, while our statistic $L$ is working even better with the smaller cutoff.
\medskip

For future research it would be interesting to take a closer look at the (co)variance estimator $\gamma$ of our $\widetilde{L}$ statistic. This estimator is not unbiased and it remains open if a statistic with an unbiased estimator works even better, in particular for the asymptotics.
\\
There are also more complex designs for the $2$-factor ANOVA, with different sums of squares or designs that allow for different group sizes. Our $L$ statistic could be generalized to more complex $2$-factor designs, or even $k$-factor designs.
\\
Additionally it would be interesting to test our statistics with more kinds of data. On the one hand there are different kinds of point pattern data, e.g. marked point patterns, but also data from other metric spaces, e.g. image data or graph data.
\medskip

In Section~\ref{sec:simulstud} we already mentioned that the computation of the Fr\'echet $T$ statistic is more time consuming than the distance based tests, because of the barycenter calculation. 
If we take for example the data from Section~\ref{subsec:simulstud_interact}, the distance based Anderson $F_A$ and our new $L$ take about $8$ seconds and $2$ seconds, respectively, for $100$ permutation tests with $999$ permutations each. 
The calculation of $100$ permutation tests of Fr\'echet $T$ with $99$ permutations each takes about $45$ minutes. 
The workhorse computation of all tests is done in C++. However, parts of the overhead for Anderson $F_A$ and Fr\'echet $T$ are programmed in R and a complete implementation in C++ might improve the runtime a little bit. 
These numbers are merely meant to give a general impression of the order of magnitude of the runtimes. 
The distance based tests take only a few seconds, while for the Fr\'echet tests the computation of the barycenters takes many minutes, even with the fast heuristics instead of the exact solution and with fewer permutations. 

Overall we find that the new $L$ in combination with the Anderson $F_A$ has a similar performance and allows for considerably faster computation than the other methods in settings where the computation of barycenters is costly.

\begin{table}[htb]
    \begin{center}
        \begin{tabular}{|c||C{1.5em}|C{1.5em}|C{1.5em}|C{1.5em}|C{1.5em}|C{1.5em}|C{1.5em}|}
            \hline
            Scenario & $1$ & $2$ & $3$ & $4$ & $5$ & $6$ & $0$ \\
            \hline
            \hline
            Fr\'echet $T$ & 100 & 100 & 100 & 98 & 11 & 2 & 6 \\
             \hline
        \end{tabular}
    \end{center}
    \vspace*{-4mm}
    \caption{Performance of the sum statistic Fr\'echet $T$. Numbers of rejections of the null hypothesis ``equal distribution in both groups'' based on 100 data sets per column for the $7$ scenarios of inhomogeneity.}
   \label{tab:SumstatResults_Inhomogeneity}
\end{table}
\begin{table}[htb]
  \begin{center}
      \begin{tabular}{|c||C{2.5em}|C{2.5em}|C{2.5em}|C{2.5em}|C{2.5em}|C{2.5em}|}
          \hline
          \hspace*{4mm} \scalebox{0.9}[0.95]{$\gamma=1$\hspace*{2.5mm}vs.\hspace*{-1.5mm}} &\scalebox{0.9}[0.95]{$\gamma=0$} &\!\scalebox{0.9}[0.95]{$\gamma=0.2$} &\!\scalebox{0.9}[0.95]{$\gamma=0.4$} &\!\scalebox{0.9}[0.95]{$\gamma=0.6$} &\!\scalebox{0.9}[0.95]{$\gamma=0.8$} &\scalebox{0.9}[0.95]{$\gamma=1$} \\
          \hline
          \hline
          Fr\'echet $T$ & $100$ & $98$ & $77$ & $48$ & $16$ & $2$  \\
          \hline
      \end{tabular}

      \begin{tabular}{|c||C{2.5em}|C{2.5em}|C{2.5em}|C{2.5em}|C{2.5em}|C{2.5em}|}
          \hline
          \hspace*{4mm} \scalebox{0.9}[0.95]{$\gamma=0$\hspace*{2.5mm}vs.\hspace*{-1.5mm}} &\scalebox{0.9}[0.95]{$\gamma=0$} &\!\scalebox{0.9}[0.95]{$\gamma=0.2$} &\!\scalebox{0.9}[0.95]{$\gamma=0.4$} &\!\scalebox{0.9}[0.95]{$\gamma=0.6$} &\!\scalebox{0.9}[0.95]{$\gamma=0.8$} &\scalebox{0.9}[0.95]{$\gamma=1$} \\
          \hline
          \hline
          Fr\'echet $T$ & $5$ & $48$ & $91$ & $98$ & $100$ & $100$  \\
          \hline
      \end{tabular}
  \end{center}
  \vspace*{-4mm}
  \caption{Performance of the sum statistic Fr\'echet $T$. Numbers of rejections of the null hypothesis ``equal distribution in both groups'' based on 100 data sets per column for the different scenarios of interaction between points.}
 \label{tab:SumstatResults_Interaction}
\end{table}

\begin{table}[!htb]
    \begin{center}
      \begin{tabular}{|c||C{2.5em}|C{2.5em}|C{2.5em}|C{2.5em}|C{2.5em}|C{2.5em}|}
        \hline
        \hspace*{4mm} \scalebox{0.9}[0.95]{$\gamma=1$\hspace*{5mm}vs.\hspace*{-4mm}} &\scalebox{0.9}[0.95]{$\gamma=0$} &\!\scalebox{0.9}[0.95]{$\gamma=0.2$} &\!\scalebox{0.9}[0.95]{$\gamma=0.4$} &\!\scalebox{0.9}[0.95]{$\gamma=0.6$} &\!\scalebox{0.9}[0.95]{$\gamma=0.8$} &\scalebox{0.9}[0.95]{$\gamma=1$} \\
            \hline
            Anderson $F_A$ & 37 & 16 & 8 & 5 & 12 & 3 \\
            \hline
            new $L$ & 100 & 100 & 96 & 69 & 15 & 6 \\
            \hline
             Fr\'echet $T_F$ & 49 & 31 & 10 & 4 & 5 & 9 \\
            \hline           
            Fr\'echet $T_L$ & 100 & 99 & 87 & 47 & 13 & 7 \\
            \hline
            Fr\'echet $T$ & 99 & 84 & 38 & 12 & 7 & 9 \\
            \hline
        \end{tabular}
    \end{center}
    \caption{C=0.1, The first scenario: Groups of Poisson-distributed point patterns vs groups of Strauss-distributed point patterns with $6$ different gammas: $\gamma = 0,0.2,0.4,0.6,0.8,1$, $R=0.1$, $\lambda=35$, $\alpha = 0.05$, $20$ patterns per group. Numbers indicate how many times the hypothesis "equal distributions in both groups" is rejected out of $100$ times. The tests should see no difference between groups of Poisson-patterns and Strauss-patterns with $\gamma=1$.}
    \label{tab:PowerPoissons}
\end{table}
\begin{table}[!htb]
    \begin{center}
      \begin{tabular}{|c||C{2.5em}|C{2.5em}|C{2.5em}|C{2.5em}|C{2.5em}|C{2.5em}|}
        \hline
        \hspace*{4mm} \scalebox{0.9}[0.95]{$\gamma=0$\hspace*{5mm}vs.\hspace*{-4mm}} &\scalebox{0.9}[0.95]{$\gamma=0$} &\!\scalebox{0.9}[0.95]{$\gamma=0.2$} &\!\scalebox{0.9}[0.95]{$\gamma=0.4$} &\!\scalebox{0.9}[0.95]{$\gamma=0.6$} &\!\scalebox{0.9}[0.95]{$\gamma=0.8$} &\scalebox{0.9}[0.95]{$\gamma=1$} \\
            \hline
            Anderson $F_A$ & 5 & 11 & 35 & 41 & 40 & 50 \\
            \hline
            new $L$ & 5 & 99 & 100 & 100 & 100 & 100 \\
            \hline
            Fr\'echet $T_F$ & 8 & 14 & 26 & 41 & 40 & 53 \\
            \hline
            Fr\'echet $T_L$ & 7 & 87 & 100 & 100 & 100 & 100 \\
            \hline
            Fr\'echet $T$ & 7 & 31 & 73 & 95 & 100 & 100 \\
            \hline
        \end{tabular}
    \end{center}
    \caption{C=0.1, The second scenario: Groups of Strauss-distributed point patterns with a fixed $\gamma=0$ vs groups of Strauss-distributed point patterns with $6$ different gammas: $\gamma = 0,0.2,0.4,0.6,0.8,1$, $R=0.1$, $\lambda=35$, $\alpha = 0.05$, $20$ patterns per group. Numbers indicate how many times the hypothesis "equal distributions in both groups" is rejected out of $100$ times. The tests should see no difference for $\gamma=0$.}
    \label{tab:PowerStrauss}
\end{table}

\bibliographystyle{apalike}
\bibliography{Anova}

\begin{thebibliography}{}

\bibitem[Alekseyenko, 2016]{alekseyenko2016multivariate}
Alekseyenko, A.~V. (2016).
\newblock Multivariate {W}elch t-test on distances.
\newblock {\em Bioinformatics}, 32(23):3552--3558.

\bibitem[Anderson, 2001]{anderson2001new}
Anderson, M.~J. (2001).
\newblock A new method for non-parametric multivariate analysis of variance.
\newblock {\em Austral ecology}, 26(1):32--46.

\bibitem[Anderson, 2006]{anderson2006distance}
Anderson, M.~J. (2006).
\newblock Distance-based tests for homogeneity of multivariate dispersions.
\newblock {\em Biometrics}, 62(1):245--253.

\bibitem[Anderson, 2017]{anderson2017permutational}
Anderson, M.~J. (2017).
\newblock Permutational multivariate analysis of variance {(PERMANOVA)}.
\newblock {\em Wiley statsref: statistics reference online}, pages 1--15.

\bibitem[Anderson et~al., 2017]{anderson2017some}
Anderson, M.~J., Walsh, D.~C., Robert~Clarke, K., Gorley, R.~N., and
  Guerra-Castro, E. (2017).
\newblock Some solutions to the multivariate {B}ehrens--{F}isher problem for
  dissimilarity-based analyses.
\newblock {\em Australian \& New Zealand Journal of Statistics}, 59(1):57--79.

\bibitem[Bertsekas, 1988]{bertsekas1988auction}
Bertsekas, D.~P. (1988).
\newblock The auction algorithm: A distributed relaxation method for the
  assignment problem.
\newblock {\em Annals of operations research}, 14(1):105--123.

\bibitem[Billera et~al., 2001]{billera2001geometry}
Billera, L.~J., Holmes, S.~P., and Vogtmann, K. (2001).
\newblock Geometry of the space of phylogenetic trees.
\newblock {\em Advances in Applied Mathematics}, 27(4):733--767.

\bibitem[Borgwardt and Patterson, 2020]{borgwardt2020improved}
Borgwardt, S. and Patterson, S. (2020).
\newblock Improved linear programs for discrete barycenters.
\newblock {\em Informs Journal on Optimization}, 2(1):14--33.

\bibitem[Borgwardt and Patterson, 2021]{borgwardt2021computational}
Borgwardt, S. and Patterson, S. (2021).
\newblock On the computational complexity of finding a sparse {W}asserstein
  barycenter.
\newblock {\em Journal of Combinatorial Optimization}, 41(3):736--761.

\bibitem[Brown and Forsythe, 1974a]{brown1974robust}
Brown, M.~B. and Forsythe, A.~B. (1974a).
\newblock Robust tests for the equality of variances.
\newblock {\em Journal of the American Statistical Association},
  69(346):364--367.

\bibitem[Brown and Forsythe, 1974b]{brown1974small}
Brown, M.~B. and Forsythe, A.~B. (1974b).
\newblock The small sample behavior of some statistics which test the equality
  of several means.
\newblock {\em Technometrics}, 16(1):129--132.

\bibitem[Cuevas et~al., 2004]{cuevas2004anova}
Cuevas, A., Febrero, M., and Fraiman, R. (2004).
\newblock An {ANOVA} test for functional data.
\newblock {\em Computational statistics \& data analysis}, 47(1):111--122.

\bibitem[Daley and Vere-Jones, 2003]{dvj1}
Daley, D. and Vere-Jones, D. (2003).
\newblock {\em An introduction to the theory of point processes. {{Vol}}.
  {{I}}}.
\newblock Springer, 2nd edition.
\newblock Elementary theory and methods.

\bibitem[Daley and Vere-Jones, 2008]{dvj2}
Daley, D. and Vere-Jones, D. (2008).
\newblock {\em An introduction to the theory of point processes. {{Vol}}.
  {{II}}}.
\newblock Springer, 2nd edition.
\newblock General theory and structure.

\bibitem[Denker and Keller, 1983]{denker1983u}
Denker, M. and Keller, G. (1983).
\newblock On {U}-statistics and v. {M}ises’ statistics for weakly dependent
  processes.
\newblock {\em Zeitschrift f{\"u}r Wahrscheinlichkeitstheorie und verwandte
  Gebiete}, 64(4):505--522.

\bibitem[Dubey and M{\"u}ller, 2019]{dubey2019frechet}
Dubey, P. and M{\"u}ller, H.-G. (2019).
\newblock Fr{\'e}chet analysis of variance for random objects.
\newblock {\em Biometrika}, 106(4):803--821.

\bibitem[Fisher, 1925]{fisher1925statistical}
Fisher, R. (1925).
\newblock {\em Statistical Methods for Research Workers}.
\newblock Oliver \& Boyd.

\bibitem[Gastwirth et~al., 2009]{gastwirth2009impact}
Gastwirth, J.~L., Gel, Y.~R., and Miao, W. (2009).
\newblock The impact of {L}evene’s test of equality of variances on
  statistical theory and practice.
\newblock {\em Statistical Science}, 24(3):343--360.

\bibitem[Ginestet et~al., 2017]{ginestet2017hypothesis}
Ginestet, C.~E., Li, J., Balachandran, P., Rosenberg, S., and Kolaczyk, E.~D.
  (2017).
\newblock Hypothesis testing for network data in functional neuroimaging.
\newblock {\em The Annals of Applied Statistics}, pages 725--750.

\bibitem[Gonz{\'a}lez et~al., 2021]{gonzalez2021two}
Gonz{\'a}lez, J.~A., Lagos-{\'A}lvarez, B.~M., and Mateu, J. (2021).
\newblock Two-way layout factorial experiments of spatial point pattern
  responses in mineral flotation.
\newblock {\em TEST}, pages 1--30.

\bibitem[Hamidi et~al., 2019]{hamidi2019w}
Hamidi, B., Wallace, K., Vasu, C., and Alekseyenko, A.~V. (2019).
\newblock {$W_d^{*}$}-test: robust distance-based multivariate analysis of
  variance.
\newblock {\em Microbiome}, 7(1):1--9.

\bibitem[Heinemann et~al., 2021]{heinemann2021}
Heinemann, F., Munk, A., and Zemel, Y. (2021).
\newblock Randomised {W}asserstein barycenter computation: Resampling with
  statistical guarantees.
\newblock {\em Preprint}.
\newblock Available at \url{https://arxiv.org/abs/2012.06397}.

\bibitem[Hoeffding, 1948]{hoeffding1948class}
Hoeffding, W. (1948).
\newblock A class of statistics with asymptotically normal distribution.
\newblock {\em The Annals of Mathematical Statistics}, 19(3):293--325.

\bibitem[Hoeffding, 1961]{hoeffding1961strong}
Hoeffding, W. (1961).
\newblock The strong law of large numbers for {U}-statistics.
\newblock Technical Report, Mimeograph Series 302, Department of Statistics,
  University of North Carolina.

\bibitem[Huckemann et~al., 2009]{huckemann2009intrinsic}
Huckemann, S., Hotz, T., and Munk, A. (2009).
\newblock Intrinsic {MANOVA} for {R}iemannian manifolds with an application to
  {K}endall's space of planar shapes.
\newblock {\em IEEE Transactions on Pattern Analysis and Machine Intelligence},
  32(4):593--603.

\bibitem[Lee and Sidford, 2014]{lee2014path}
Lee, Y.~T. and Sidford, A. (2014).
\newblock Path finding methods for linear programming: Solving linear programs
  in {Õ}(vrank) iterations and faster algorithms for maximum flow.
\newblock In {\em 2014 IEEE 55th Annual Symposium on Foundations of Computer
  Science}, pages 424--433. IEEE.

\bibitem[Levene, 1960]{levene1960robust}
Levene, H. (1960).
\newblock Robust tests for equality of variances.
\newblock {\em Contributions to Probability and Statistics}, pages 278--292.

\bibitem[Mardia et~al., 1979]{mardia1979multivariate}
Mardia, K., Kent, J., and Bibby, J. (1979).
\newblock {\em Multivariate Analysis}.
\newblock Academic Press, London.

\bibitem[M\"uller and Schuhmacher, 2021]{ttbary}
M\"uller, R. and Schuhmacher, D. (2021).
\newblock {\em ttbary: Barycenter Methods for Spatial Point Patterns}.
\newblock R package version 0.2-0.\,
  \url{https://CRAN.R-project.org/package=ttbary}.

\bibitem[M{\"u}ller et~al., 2020]{muller2020metrics}
M{\"u}ller, R., Schuhmacher, D., and Mateu, J. (2020).
\newblock Metrics and barycenters for point pattern data.
\newblock {\em Statistics and Computing}, 30:953--972.

\bibitem[Ram{\'o}n et~al., 2016]{ramon2016new}
Ram{\'o}n, P., de~la Cruz, M., Chac{\'o}n-Labella, J., and Escudero, A. (2016).
\newblock A new non-parametric method for analyzing replicated point patterns
  in ecology.
\newblock {\em Ecography}, 39(11):1109--1117.

\bibitem[Scheff{\'e}, 1967]{scheffe1967analysis}
Scheff{\'e}, H. (1967).
\newblock {\em The analysis of variance}.
\newblock John Wiley \& Sons, 5th printing, 1st edition.

\bibitem[Welch, 1951]{welch1951comparison}
Welch, B.~L. (1951).
\newblock On the comparison of several mean values: An alternative approach.
\newblock {\em Biometrika}, 38(3/4):330--336.

\bibitem[Wooldridge, 2010]{wooldridge2010econometric}
Wooldridge, J.~M. (2010).
\newblock {\em Econometric Analysis of Cross Section and Panel Data}.
\newblock MIT press.

\end{thebibliography}

\newpage

\appendix

\section{Auxiliary Results Used for the Proof of Theorem~\ref{thm:mainasymp}}

For completeness and self-containedness we state here (consequences of) results from the literature as well as some additional calculations needed for the proof of Theorem~\ref{thm:mainasymp}.

Firstly we formulate a straightforward generalization of Hoeffding's theorem for the asymptotic normality of $U$-statistics (univariate version of Theorem~7.1 in \citealp{hoeffding1948class}) for random elements in the general metric space $\mcx$ with countably generated Borel $\sigma$-algebra. See also Theorem 1(b) of~\cite{denker1983u}, where this result is further generalized to (weakly) dependent sequences of random elements.

\begin{theo} \label{theo:hoeffding}
    Let $(X_n)_{n \in \mathbb{N}}$ be an i.i.d.\ sequence of $\mathcal{X}$-valued random elements. Let $h \colon \mathcal{X}^m \to \RR$ be symmetric and non-degenerate in the sense that there are $x_2, \ldots, x_m \in \mathcal{X}$ such that
    \begin{equation*}
        \mathbb{E} h(X_1, x_2, \ldots, x_m) \neq 0.
    \end{equation*}
    Suppose further that $\mathbb{E} \bigl(h(X_1,\ldots,X_m)^2\bigr) < \infty$.
    We write
    \begin{equation*}
        U_n = {\binom{n}{m}}^{-1} \sum_{\substack{i_1,\ldots,i_m=1 \\ i_1 < \ldots < i_m}}^{n} h(X_{i_1},\ldots,X_{i_m}).
    \end{equation*}
    for the $U$-statistic with kernel $h$.
    Then
    \begin{equation*}
        \sqrt{n} (U_n - \mathbb{E}(U_n)) \stackrel{\mathcal{D}}{\longrightarrow} \mathcal{N}(0,m^2 \gamma_h^2),
    \end{equation*}
    where for an independent copy $(\tX_2,\ldots,\tX_m)$ of $(X_2,\ldots,X_m)$
    \begin{equation*}
        \gamma_h^2 = \mathrm{Cov} \bigl(h(X_1,X_2,\ldots,X_m), h(X_1,\tX_2,\ldots,\tX_m) \bigr) = \mathrm{Var} \bigl(\mathbb{E} (h(X_1,\ldots,X_m) \, \vert \, X_1) \bigr).
    \end{equation*}
\end{theo}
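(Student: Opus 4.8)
The statement is Hoeffding's classical central limit theorem for $U$-statistics (the univariate case of Theorem~7.1 in \citealp{hoeffding1948class}), so the plan is to run his projection argument and to verify that every step uses only measurability and the moment bound $\mathbb{E}\bigl(h(X_1,\ldots,X_m)^2\bigr)<\infty$, never any Euclidean structure of the sample space. Write $\theta=\mathbb{E}\,h(X_1,\ldots,X_m)$, so that $\mathbb{E}U_n=\theta$, and introduce the first-order projection
$$
  h_1(x)=\mathbb{E}\bigl[h(x,X_2,\ldots,X_m)\bigr]-\theta ,\qquad x\in\mathcal{X}.
$$
By Fubini's theorem for the product measure $P^{\otimes m}$ on $\mathcal{X}^m$ (applicable since $\mathbb{E}\lvert h\rvert\le(\mathbb{E}h^2)^{1/2}<\infty$) this is a well-defined, measurable, integrable function with $\mathbb{E}\,h_1(X_1)=0$, and testing against indicators of sets in $\sigma(X_1)$ shows that $h_1(X_1)+\theta$ is a version of $\mathbb{E}\bigl(h(X_1,\ldots,X_m)\mid X_1\bigr)$. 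The assumption that the Borel $\sigma$-algebra is countably generated is what keeps the ensuing conditional-expectation calculus (regular conditional distributions, the Doob--Dynkin lemma) unproblematic; compare also Theorem~1(b) in \cite{denker1983u}. By conditional Jensen, $\mathrm{Var}\bigl(h_1(X_1)\bigr)=\gamma_h^2\le\mathbb{E}(h^2)<\infty$.

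First I would form the H\'ajek projection
$$
  \widehat U_n=\sum_{i=1}^n\mathbb{E}\bigl[U_n-\theta\mid X_i\bigr]=\frac{m}{n}\sum_{i=1}^n h_1(X_i),
$$
the second equality following from exchangeability, since a given index lies in an $m$-subset of $\{1,\ldots,n\}$ with probability $m/n$. The heart of the matter is the $L^2$ estimate $n\,\mathbb{E}\bigl[(U_n-\theta-\widehat U_n)^2\bigr]\to 0$. This is Hoeffding's variance identity: an orthogonal expansion of $h$ gives $\mathrm{Var}(U_n)=\binom{n}{m}^{-1}\sum_{c=1}^m\binom{m}{c}\binom{n-m}{m-c}\zeta_c$ with $\zeta_c=\mathrm{Var}\bigl(\mathbb{E}(h(X_1,\ldots,X_m)\mid X_1,\ldots,X_c)\bigr)$, all finite by the moment assumption and with $\zeta_1=\gamma_h^2$; hence $n\,\mathrm{Var}(U_n)\to m^2\gamma_h^2$, while $n\,\mathrm{Var}(\widehat U_n)=m^2\gamma_h^2$ exactly, and since $\widehat U_n$ is precisely the $L^2$-projection of $U_n-\theta$ onto sums of one-coordinate functions we obtain $n\,\mathbb{E}[(U_n-\theta-\widehat U_n)^2]=n\,\mathrm{Var}(U_n)-n\,\mathrm{Var}(\widehat U_n)\to 0$. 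Every quantity here is a finite second moment, so the computation is word-for-word the classical one.

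It then remains to apply the ordinary Lindeberg--L\'evy CLT to the i.i.d.\ real random variables $h_1(X_1),h_1(X_2),\ldots$, which are centred with variance $\gamma_h^2<\infty$: thus $n^{-1/2}\sum_{i=1}^n h_1(X_i)\inlawto\mathcal{N}(0,\gamma_h^2)$, so $\sqrt{n}\,\widehat U_n\inlawto\mathcal{N}(0,m^2\gamma_h^2)$, and Slutsky's lemma combined with the $L^2$ estimate gives $\sqrt{n}\,(U_n-\mathbb{E}U_n)\inlawto\mathcal{N}(0,m^2\gamma_h^2)$. Finally the two formulas for $\gamma_h^2$ must be reconciled: it equals $\mathrm{Var}\bigl(\mathbb{E}(h(X_1,\ldots,X_m)\mid X_1)\bigr)$ by construction of $h_1$, while conditioning on $X_1$ and using independence of the blocks $(X_2,\ldots,X_m)$ and $(\tX_2,\ldots,\tX_m)$ turns $\mathrm{Cov}\bigl(h(X_1,X_2,\ldots,X_m),\,h(X_1,\tX_2,\ldots,\tX_m)\bigr)$ into $\mathbb{E}\bigl[(h_1(X_1)+\theta)^2\bigr]-\theta^2=\gamma_h^2$.

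I expect the only genuine obstacle to be the measurability bookkeeping, not any analytic difficulty: one must make sure $h_1$ and the higher conditional expectations behind the $\zeta_c$ really are measurable functions on $\mathcal{X},\mathcal{X}^2,\ldots$, and it is exactly here that the countable-generation hypothesis enters — with it in place, nothing in Hoeffding's argument needed $\mathcal{X}=\RR$, and one could equally cite \cite{denker1983u}. The non-degeneracy hypothesis is not used in the convergence argument itself; it is the customary requirement ensuring that the first-order term does not vanish, so that the limiting Gaussian is non-degenerate.
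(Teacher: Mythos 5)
Your proof is correct and is exactly the classical Hoeffding/H\'ajek projection argument that the paper itself invokes purely by citation (Theorem~7.1 of Hoeffding, 1948, together with Denker's generalization) without writing out any proof; your check that no step of the projection argument, the variance identity, or the Lindeberg--L\'evy CLT uses the Euclidean structure of $\mathcal{X}$ is precisely the content of the paper's assertion that the generalization to metric-space-valued random elements is ``straightforward''. Nothing further is needed.
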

\begin{remark}
In the setting of Theorem~\ref{theo:hoeffding} above, Theorem~5.2 of~\cite{hoeffding1948class} yields
\begin{equation*}
   m^2 \gamma_h^2 \leq n \mathrm{Var}(U_n) \leq m \mathrm{Var}(h(X_1,\ldots,X_m))
\end{equation*}
for all $n \geq m$. The right hand bound is sharp for $n=m$ and $n \mathrm{Var}(U_n) \searrow m^2 \gamma_h^2$ as $n \to \infty$.
\end{remark}
The above inequality means in particular that for finite $n$ the expression $\frac{m^2}{n} \gamma_h^2$ can only underestimate $\mathrm{Var}(U_n)$. The exact formula for $m=2$ is
\begin{equation*}
  n \mathrm{Var}(U_n) = \frac{n-2}{n-1} \cdot 4 \gamma_h^2 + \frac{1}{n-1} \cdot 2 \mathrm{Var}(h(X_1,X_2)).
\end{equation*}

The next result is similar to classical ANOVA. For completeness we give its proof.
\begin{lemma} \label{lem:idemcomp}
  Let $C \in \mathbb{R}^{(k-1)\times k}$ as in \eqref{matrixC}, $D\in \mathbb{R}^{n\times k}$ as in \eqref{matrixD}, $U=(u_1,\ldots,u_k)'$ and $\nu = (n_1, \ldots, n_k)'$. We have
  \begin{equation*}
    C'(C(D'D)^{-1}C')^{-1} C = D'D - \frac{1}{n} \nu\nu'
  \end{equation*}
  and
  \begin{equation*}
    U'(D'D - \frac{1}{n} \nu\nu')U = \frac{1}{n}\sum_{i=1}^{k-1}\sum_{j=i+1}^k n_in_j(u_i-u_j)^2
  \end{equation*}
\end{lemma}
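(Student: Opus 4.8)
The plan is to establish the two displayed identities in turn, the first being the only one requiring any real idea. Throughout I would use that the columns of $D$ are the indicator vectors of the $k$ disjoint groups, so that $D'D = \Lambda := \mathrm{diag}(n_1,\ldots,n_k)$, which is invertible since all $n_i>0$; it thus suffices to prove $C'(C\Lambda^{-1}C')^{-1}C = \Lambda - \tfrac1n\nu\nu'$.

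For this first identity, the key observation is that $C'(C\Lambda^{-1}C')^{-1}C$ is a $\Lambda^{1/2}$-congruence of an orthogonal projection. Setting $B := C\Lambda^{-1/2}\in\RR^{(k-1)\times k}$, which has full row rank $k-1$ (multiplication by the invertible $\Lambda^{-1/2}$ does not change the rank of $C$), I would note that $B'(BB')^{-1}B = \Lambda^{-1/2}\,C'(C\Lambda^{-1}C')^{-1}C\,\Lambda^{-1/2}$ is the orthogonal projection onto $\mathrm{range}(B') = (\ker B)^{\perp}$. Since every row of $C$ has entries summing to zero, $C\bone = 0$, and a dimension count gives $\ker C = \mathrm{span}(\bone)$; hence $\ker B = \Lambda^{1/2}\ker C = \mathrm{span}(w)$ with $w := \Lambda^{1/2}\bone = (\sqrt{n_1},\ldots,\sqrt{n_k})'$. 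Therefore the projection equals $I_k - ww'/\norm{w}^2 = I_k - \tfrac1n ww'$, using $\norm{w}^2 = \sum_{i=1}^k n_i = n$. Conjugating back by $\Lambda^{1/2}$ and using $\Lambda^{1/2}w = \Lambda\bone = \nu$ yields
\[
  C'(C\Lambda^{-1}C')^{-1}C = \Lambda^{1/2}\bigl(I_k - \tfrac1n ww'\bigr)\Lambda^{1/2} = \Lambda - \tfrac1n\nu\nu',
\]
which is the claim once we recall $\Lambda = D'D$. (An equivalent, more hands-on route partitions off the last coordinate, computes $C\Lambda^{-1}C' = \Lambda_1^{-1} + n_k^{-1}\bone\bone'$ with $\Lambda_1 = \mathrm{diag}(n_1,\ldots,n_{k-1})$, inverts it by Sherman--Morrison, and checks the resulting block matrix against $\Lambda - \tfrac1n\nu\nu'$ block by block.)

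For the second identity I would simply expand both sides. The left-hand side is $U'\Lambda U - \tfrac1n(\nu'U)^2 = \sum_{i=1}^k n_i u_i^2 - \tfrac1n\bigl(\sum_{i=1}^k n_i u_i\bigr)^2$. On the right, expanding $(u_i-u_j)^2$ over $i<j$ and symmetrizing to sums over $i\neq j$, one uses $\sum_{i\neq j} n_i n_j u_i^2 = \sum_i n_i u_i^2(n-n_i)$ together with $\sum_{i\neq j} n_i n_j u_i u_j = \bigl(\sum_i n_i u_i\bigr)^2 - \sum_i n_i^2 u_i^2$; the $\sum_i n_i^2 u_i^2$ contributions cancel and one is left with exactly $\sum_i n_i u_i^2 - \tfrac1n\bigl(\sum_i n_i u_i\bigr)^2$, matching the left-hand side. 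The main (and only mild) obstacle is the first identity: one must recognize that $C'(C(D'D)^{-1}C')^{-1}C$ is, up to the diagonal congruence, the orthogonal projection onto the orthocomplement of $\Lambda^{1/2}\bone$ — equivalently, carry out the Sherman--Morrison computation without sign errors. The reduction $D'D = \Lambda$ and the second identity are routine algebra with diagonal matrices and double sums.
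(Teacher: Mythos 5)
Your proof is correct, but for the main identity it takes a genuinely different route from the paper. The paper proceeds by brute-force block computation: it writes $D'D=\Lambda:=\mathrm{diag}(n_1,\ldots,n_k)$, computes $C\Lambda^{-1}C'=\Lambda_{(k-1)}^{-1}+\tfrac{1}{n_k}\mathds{1}_{(k-1)}\mathds{1}_{(k-1)}'$ explicitly, inverts this rank-one perturbation to get $\Lambda_{(k-1)}-\tfrac1n\nu_{(k-1)}\nu_{(k-1)}'$, and then multiplies back by $C'$ and $C$ — essentially the ``hands-on'' Sherman--Morrison route you mention only parenthetically. Your primary argument instead identifies $\Lambda^{-1/2}C'(C\Lambda^{-1}C')^{-1}C\Lambda^{-1/2}$ as the orthogonal projection onto $(\ker(C\Lambda^{-1/2}))^{\perp}$ and uses $\ker C=\mathrm{span}(\bone)$ to write it as $I_k-\tfrac1n ww'$ with $w=\Lambda^{1/2}\bone$; conjugating back gives the claim. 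This is cleaner and more informative: it shows the identity holds for \emph{any} full-row-rank contrast matrix $C$ with $C\bone=0$, not just the specific one in \eqref{matrixC}, whereas the paper's computation is tied to that particular $C$. The price is that you invoke the standard facts about projections onto $\mathrm{range}(B')$ rather than staying entirely within elementary matrix multiplication. For the second identity both you and the paper expand and symmetrize double sums; you verify equality by reducing both sides to $\sum_i n_iu_i^2-\tfrac1n\bigl(\sum_i n_iu_i\bigr)^2$, while the paper transforms the left side into the right in one chain — the same computation organized in the opposite direction.
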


\begin{proof} Define
  \begin{align*}
    \nu_{(i)} &:= (n_1,\ldots,n_i)' \text{ , } \; \Lambda_{(i)} := \mathrm{diag}(\nu_{(i)}) \in \RR^{i \times i}
    \; \text{ and } \; \mathds{1}_{(i)} := (1,\ldots,1)' \in \mathbb{R}^i.
  \end{align*}
Then $\mathds{1}_{(i)}\mathds{1}_{(i)}'$ is the $i\times i$ matrix of $1$'s.
  We build up the equality step by step. Since $D'D = \Lambda_{(k)}$ and therefore 
  \begin{align*}
    (D'D)^{-1} = (\Lambda_{(k)})^{-1} = \mathrm{diag}(1/n_1,\ldots,1/n_k),
  \end{align*}
  We obtain 
  \begin{align*}
    C(D^{\prime} D)^{-1}C^{\prime} &= (\Lambda_{(k-1)})^{-1} + \frac{1}{n_k}\cdot \mathds{1}_{(k-1)}\mathds{1}_{(k-1)}'
  \end{align*}
  and 
  \begin{align*}
    (C(D^{\prime} D)^{-1}C^{\prime})^{-1} &= \Lambda_{(k-1)} - \frac{1}{n}\cdot\nu_{(k-1)}\nu_{(k-1)}'
  \end{align*}
  and finally 
  \begin{align*}
    C^\prime(C(D^{\prime} D)^{-1}C^{\prime})^{-1}C = \Lambda_{(k)} - \frac{1}{n}\cdot\nu\nu'
  \end{align*}

  When we multiply the vector $U$ from left and right, the $ij$-th entry in the matrix is the coefficient of $u_iu_j$. This leads to
  \begin{align*}
    U'(D'D &- \frac{1}{n} \nu\nu')U
    \\
    &=\sum_{i=1}^k n_i u_i^2 - \frac{1}{n}\sum_{i=1}^k n_i^2u_i^2 - \frac{1}{n}\sum_{i=1}^{k-1}\sum_{j=i+1}^k 2n_in_j u_iu_j
    \\
    &= \frac{1}{n}\sum_{i=1}^k n_i u_i^2 \sum_{j=1}^k n_j - \frac{1}{2n}\sum_{i=1}^k n_i^2(u_i^2 + u_i^2) - \frac{1}{n}\sum_{i=1}^{k-1}\sum_{j=i+1}^k 2n_in_j u_iu_j
    \\
    &= \frac{1}{2n}\sum_{i=1}^k \sum_{j=1}^k n_in_j (u_i^2 + u_j^2) - \frac{1}{2n}\sum_{i=1}^k n_i^2(u_i^2 + u_i^2) - \frac{1}{n}\sum_{i=1}^{k-1}\sum_{j=i+1}^k 2n_in_j u_iu_j
    \\
    &= \frac{1}{n}\sum_{i=1}^{k-1} \sum_{j=i+1}^k n_in_j (u_i^2 + u_j^2) - \frac{1}{n}\sum_{i=1}^{k-1}\sum_{j=i+1}^k 2n_in_j u_iu_j
    \\
    &= \frac{1}{n}\sum_{i=1}^{k-1} \sum_{j=i+1}^k n_in_j (u_i - u_j)^2.
\end{align*}
\end{proof}

\begin{remark} \label{rem:idemcomp}
Let $\bar{u} = \frac{1}{k}\sum_{i=1}^k u_i$. An equivalent expression for $U'(D'D - \frac{1}{n} \nu\nu')U$ in Lemma~\ref{lem:idemcomp} can be computed as
\begin{align*}
  \frac{1}{n}&\sum_{i=1}^{k-1} \sum_{j=i+1}^k n_in_j (u_i - u_j)^2
  \\
  &= \frac{1}{2n}\sum_{i=1}^{k}\sum_{j=1}^k n_in_j ((u_i - \bar{u}) + (\bar{u} - u_j))^2
  \\
  &= \frac{1}{2n}\sum_{i=1}^{k}\sum_{j=1}^k n_in_j (u_i - \bar{u})^2
  +\frac{1}{2n}\sum_{i=1}^{k}\sum_{j=1}^k n_in_j (\bar{u} - u_j)^2
  +\frac{1}{2n}\sum_{i=1}^{k}\sum_{j=1}^k n_in_j (u_i - \bar{u})(\bar{u} - u_j)
  \\
  &= \sum_{i=1}^{k}n_i(u_i - \bar{u})^2
  +\frac{1}{2n}\sum_{i=1}^{k}n_i(u_i - \bar{u})\sum_{j=1}^k n_j (\bar{u} - u_j)
  \\
  &= \sum_{i=1}^{k}n_i(u_i - \bar{u})^2
  +\frac{1}{2n}\left(\sum_{i=1}^{k}n_i(u_i - \bar{u})\right)^2
\end{align*}
If $n_1=\ldots=n_k=\tn$, we see directly from the right-hand side that
\begin{align}\label{eq:varidentity}
  \frac{1}{n}\sum_{i=1}^{k-1} \sum_{j=i+1}^k &n_in_j (u_i - u_j)^2 = \sum_{i=1}^{k}n_i(u_i - \bar{u})^2
  = \tn \sum_{i=1}^{k} (u_i - \bar{u})^2.
\end{align}
\end{remark}

The following lemma is well known. It follows by spectral decomposition, see e.g.\ Kent, Mardia and Bibby (1979), 
Theorem~3.4.4(b), setting $p=1$ and $\Sigma = I$.
\begin{lemma} \label{lem:cochran}
  Let $Z \sim \mcn_n(0,I)$ and let $C \in \RR^{n\times n}$ be symmetric and idempotent. Then $Z' C Z \sim \chi^2_r$, where $r = \mathrm{trace}(C) = \mathrm{rank}(C)$.
\end{lemma}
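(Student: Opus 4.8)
The plan is to diagonalize $C$ via the spectral theorem and then reduce the quadratic form $Z'CZ$ to a plain sum of squared independent standard normals. Since $C$ is symmetric, there is an orthogonal matrix $Q \in \RR^{n \times n}$ and a diagonal matrix $\Lambda = \mathrm{diag}(\lambda_1, \ldots, \lambda_n)$ with $C = Q \Lambda Q'$. The idempotency $C^2 = C$ transfers to the eigenvalues as $\lambda_i^2 = \lambda_i$, so each $\lambda_i \in \{0, 1\}$. Writing $r$ for the number of indices with $\lambda_i = 1$, we immediately get $\mathrm{trace}(C) = \sum_i \lambda_i = r$, and since $C$ is diagonalizable its rank equals the number of nonzero eigenvalues, which is also $r$. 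This settles the claimed identity $\mathrm{trace}(C) = \mathrm{rank}(C) = r$ in passing.

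Next I would exploit the rotation invariance of the standard Gaussian. Set $W := Q'Z$. As a linear image of a Gaussian vector, $W$ is Gaussian with mean $Q' \cdot 0 = 0$ and covariance $Q' I Q = Q'Q = I_n$, so $W \sim \mcn_n(0, I_n)$; in particular its components $W_1, \ldots, W_n$ are i.i.d.\ $\mcn(0,1)$.

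Finally I would rewrite the quadratic form and read off the distribution. Substituting the decomposition yields
\begin{equation*}
  Z' C Z = Z' Q \Lambda Q' Z = W' \Lambda W = \sum_{i=1}^n \lambda_i W_i^2 = \sum_{i \,:\, \lambda_i = 1} W_i^2,
\end{equation*}
a sum of exactly $r$ squared independent standard normal variables, which is by definition $\chi^2_r$. There is no genuine obstacle in this argument, as it is the classical reduction underlying Cochran's theorem; the only points requiring a little care are the eigenvalue dichotomy forced by idempotency and the invariance of $\mcn_n(0,I)$ under orthogonal transformations, both of which are elementary.
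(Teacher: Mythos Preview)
Your proof is correct and follows exactly the spectral decomposition approach the paper invokes (the paper does not spell it out but simply cites Theorem~3.4.4(b) of Kent, Mardia and Bibby with $p=1$, $\Sigma=I$). Your write-up is in fact the standard unpacking of that citation.
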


\end{document}